\newtheorem{theorem}{Theorem}%[section]
\newtheorem{Proposition}{Proposition}
\newtheorem{problem}{Problem}
\begin{document}
\title{Helper-Friendly Latency-Bounded Mitigation Strategies against Reactive Jamming Adversaries}

\author{Soumita Hazra and J. Harshan
\thanks{The authors are with the Department of Electrical Engineering, Indian Institute of Technology Delhi, India. Email: Soumita.Hazra@ee.iitd.ac.in, jharshan@ee.iitd.ac.in.}}

% The paper headers
%\markboth{Journal of \LaTeX\ Class Files,~Vol.~14, No.~8, August~2021}%
%{Shell \MakeLowercase{\textit{et al.}}: A Sample Article Using IEEEtran.cls for IEEE Journals}

%\IEEEpubid{0000--0000/00\$00.00~\copyright~2021 IEEE}
% Remember, if you use this you must call \IEEEpubidadjcol in the second
% column for its text to clear the IEEEpubid mark.

\maketitle
%\vspace{-2cm}
\begin{abstract}
Due to the recent developments in the field of full-duplex radios and cognitive radios, a new class of reactive jamming attacks has gained attention wherein an adversary transmits jamming energy over the victim's frequency band and also monitors various energy statistics in the network so as to detect countermeasures, thereby trapping the victim. Although cooperative mitigation strategies against such security threats exist, they are known to incur spectral-efficiency loss on the \textcolor{black}{helper node}, and are also not robust to variable latency-constraints on victim’s messages. Identifying these research gaps in existing countermeasures against reactive jamming attacks, we propose a family of helper-friendly cooperative mitigation strategies \textcolor{black}{that are applicable for a wide-range of latency-requirements on the victim’s messages as well as practical radio hardware at the helper nodes}. The proposed strategies are designed to facilitate reliable communication for the victim, without compromising the helper’s spectral efficiency and also minimally disturbing the various energy statistics in the network. For theoretical guarantees on their efficacy, interesting optimization problems are formulated on the choice of the underlying parameters, followed by extensive mathematical analyses on their error-performance and \textcolor{black}{covertness}. Experimental results indicate that the proposed strategies should be preferred over the state-of-the-art methods when the helper node is unwilling to compromise on its error performance for assisting the victim.

\end{abstract}

\begin{IEEEkeywords}
Jamming attacks, reactive adversaries, generalized energy detectors.
\end{IEEEkeywords}

\section{Introduction}
\label{sec:intro}

\IEEEPARstart{J}{amming} \cite{Jamming2022} is a popular form of Denial of Service (DoS) \cite{DOS2} attack on wireless communication wherein an adversary transmits jamming energy over the victim's frequency band such that the signal-to-interference-noise ratio (SINR) at the intended destination is insufficient for reliable communication. \textcolor{black}{Although frequency hopping based anti-jamming solutions have been well studied to mitigate such attacks, they are spectrally inefficient, making them inapplicable to bandwidth-hungry next-generation networks. To circumvent these problems, recent developments in the field of Reconfigurable Intelligent Surfaces (RIS) \cite{RIS_basics} have led to new spectrally-efficient anti-jamming techniques \cite{RIS_Anti_jam1, RIS_Anti_jam2, RIS_Anti_jam3, RIS_Anti_jam4} wherein reflective surfaces are used to control the direction and the strength of the transmitted signals in order to guarantee the required SINR level in the presence of jamming signals. These anti-jamming solutions are effective against non-reactive jammers that passively inject jamming energy aligned with the victim's beam without monitoring the victim's response.}

With the advent of full-duplex radios (FDRs) \cite{FDR3,SIC3} and the emergence of cognitive radios \cite{CR2015}, a new class of jamming adversaries called reactive adversaries, has received attention in the recent past, wherein the adversary has the capability to transmit jamming energy over the victim's frequency band while also monitoring the same band for detecting potential countermeasures.

Within the class of reactive adversaries, \cite{V1,V2,lowlatency} were the first to propose a threat model, wherein the reactive adversary transmits jamming energy over a victim's frequency while monitoring its \emph{average energy} to detect the presence of any countermeasures. As a consequence, it was shown that traditional mitigation strategies such as frequency-hopping could not be used under such threat models as it would drop the average energy level over victim's frequency band, thereby making the adversary suspicious. Recently, \cite{ISIT,TVT1} advanced the monitoring capability of the threat model in \cite{V1,V2,lowlatency} by considering a reactive adversary, which replaced the average energy detector by a Kullback-Leibler Divergence (KLD) estimator \cite{KLD} to detect any change in the statistical distribution on the energy of the received symbols on the victim's frequency band. Additionally, the threat model of \cite{ISIT,TVT1} considered multi-band sensing capability at the adversary, which could inject jamming energy on the victim's frequency band, and also monitor the following \emph{generalized} energy metrics on all the frequency bands: (i) the instantaneous energies of the observed symbols and (ii) the statistical distribution on the energies of the observed symbols. Thus, the threat model of \cite{ISIT,TVT1} has challenged wireless clients to reliably communicate to their destinations by maintaining strict covertness, i.e., without getting detected by a reactive adversary with \emph{generalized energy detectors}.

To reliably and covertly communicate under the presence of a reactive adversary with the above-mentioned generalized energy detectors, \cite{ISIT,TVT1} proposed a cooperative mitigation strategy called the Rate-Half Scheme (RHS), wherein the victim node takes the assistance from a nearby helper node to communicate its information to the destination. In the RHS, the victim node is asked to communicate on the helper's frequency band, wherein they adopt a form of time division multiple access (TDMA) to transmit their information symbols using a part of their energies. Concurrently, both the victim and the helper node cooperatively communicate dummy symbols on the victim's frequency band using their residual energies. The exact protocol for communication between the victim and the helper node is designed to ensure: (i) reliable communication of their information symbols, (ii) minimal disturbance in the statistical distribution of the transmitted symbols over the victim's and the helper's frequency band, and finally (iii) minimal disturbance in the instantaneous energy on the helper's frequency band. In the next subsection, we identify some research gaps in the RHS \cite{ISIT,TVT1}, and subsequently use these gaps to motivate our problem statement.

\color{black}
\begin{table*}[ht]\caption{\textcolor{black}{In this novelty table, Baseline 1, Baseline 2 and Baseline 3 are used to denote the work done in \cite{V2}, \cite{lowlatency} and \cite{TVT1}, respectively. Also, $M$ denotes the modulation order of the helper node and $q$ represents the number of channel uses.}}\label{Novelty_table}
{%
\centering
\begin{small}
 \begin{adjustbox}{max width=\textwidth}
\begin{tabular}{|c|c|c|c|c|}
\hline
\textcolor{black}{\textbf{Parameters}}     & \textcolor{black}{\textbf{Baseline 1}}       & \textcolor{black}{\textbf{Baseline 2}} & \textcolor{black}{\textbf{Baseline 3}} & \textcolor{black}{\textbf{Our contribution}} \\ \hline
\textcolor{black}{{Jammer's detection capability }} & \multicolumn{2}{c|}{\textcolor{black}{Average energy based detector}}    &  \multicolumn{2}{c|}{\begin{tabular}[c]{@{}l@{}}\textcolor{black}{- Instantaneous energy based detector}\\ \textcolor{black}{- KLD-estimator based detector}\end{tabular}} \\ \hline
   \textcolor{black}{Joint spectral-efficiency (bits/sec/Hz)}  & \textcolor{black}{$0.5 (1+\log_2M)$}   &  \textcolor{black}{$0.5 (1+\log_2M)$}   & \textcolor{black}{$0.25 (1+\log_2M)$}  &  \textcolor{black}{$0.5 (0.5+\log_2M)$}     \\ \hline           
 \textcolor{black}{FDRs}  & \textcolor{black}{Impractical} &  \textcolor{black}{Practical  }& 
 \multicolumn{1}{c|}{\textcolor{black}{Impractical}} & 
 \begin{tabular}[c]{@{}l@{}}\textcolor{black}{Considers practical FDRs}\\ \textcolor{black}{(Delay-Tolerant Rate-Three-Fourth Scheme)}
 \end{tabular}   \\ \hline           
\textcolor{black}{ Strict latency constraints on victim's messages}  & \textcolor{black}{Applicable} & \textcolor{black}{Inapplicable}  & \multicolumn{1}{c|}{\textcolor{black}{Inapplicable}} & \begin{tabular}[c]{@{}l@{}}\textcolor{black}{Applicable for strict latency constraints on victim's messages}\\ \textcolor{black}{(Low-Latency Constrained Rate-Three-Fourth Scheme)} \end{tabular}   \\ \hline
      \textcolor{black}{Secret-key overhead} & \textcolor{black}{{0}}  & \textcolor{black}{{$2$  bits/block}}&\multicolumn{1}{c|}{\textcolor{black}{$1+ 2\log_2 M$  bits/block}} & \textcolor{black}{$1+ \log_2 M$  bits/block}        \\ \hline           
      \textcolor{black}{Modulation at helper} &  \multicolumn{3}{c|}{\textcolor{black}{Only $M$-PSK}}   & \textcolor{black}{Arbitrary}      \\ \hline                 \textcolor{black}{Position of helper} &  \multicolumn{3}{c|}{\textcolor{black}{Specific}}   & \textcolor{black}{Arbitrary}      \\ \hline     
       \textcolor{black}{Reliability} &\multirow{3}{*}{\begin{tabular}[c]{@{}l@{}}\textcolor{black}{Inapplicable due} \\ \textcolor{black}{to impractical FDR} \end{tabular}}  & \textcolor{black}{Inferior} &\multicolumn{1}{c|}{\textcolor{black}{Superior (Victim-friendly)}} & \textcolor{black}{Superior (Helper-friendly)}        \\ \cline{1-1} \cline{3-5} 
        \textcolor{black}{Covertness}  &    & \textcolor{black}{Superior} &\multicolumn{1}{c|}{\textcolor{black}{Inferior}} & \textcolor{black}{Inferior}       \\ \cline{1-1} \cline{3-5} 
          \textcolor{black}{Joint reliability and covertness at operating point}  &    & \textcolor{black}{Inferior} &\multicolumn{1}{c|}{\textcolor{black}{Inapplicable}} & \textcolor{black}{Superior }     \\ \hline  
         \textcolor{black}{Spectral-efficiency of helper node} & \multicolumn{2}{c|}{\textcolor{black}{$\log_2M$}}     & \multicolumn{1}{c|}{\textcolor{black}{$0.5\log_2M$}} & \textcolor{black}{$\log_2M$}       \\   \hline   
\textcolor{black}{Decoding complexity (Optimal)}  & \textcolor{black}{$\mathcal{O} (2M)$}   & \textcolor{black}{$\mathcal{O} (4M)$} &\multicolumn{1}{c|}{\textcolor{black}{$\mathcal{O} (2M)$}} & \textcolor{black}{DTRTF-   $\mathcal{O} (2M^2)$,  LLCRTF-   $\mathcal{O} (2M)$}    \\ \hline
\textcolor{black}{Decoding complexity  (Sub-Optimal)}& \textcolor{black}{$\mathcal{O} (2M)$}   & --- &\multicolumn{1}{c|}{\textcolor{black}{$\mathcal{O} (2M)$} }& \textcolor{black}{DTRTF-   $\mathcal{O} (2M)$, LLCRTF-   ---}        \\ \hline
  \begin{tabular}[c]{@{}l@{}}
  \textcolor{black}{Space complexity for decoder}\end{tabular}& \textcolor{black}{$0$} & \textcolor{black}{$0$} &\multicolumn{1}{c|}{\textcolor{black}{$1$}} & \textcolor{black}{DTRTF-   $q$,  LLCRTF-   $0$}    \\ \hline
         
\end{tabular}%
  \end{adjustbox}
\end{small}

}
\end{table*}
\color{black}

\subsection{Motivation and Problem Statement}

As part of the RHS, although the victim and the helper node adopt a form of TDMA for transmitting their information symbols, the helper node is asked to transmit a dummy symbol preshared with the destination in order to maintain instantaneous energy over the helper's frequency band. As a consequence, this strategy reduces the transmission rate of the helper node by half, and moreover, requires the helper node and the destination to agree upon a shared randomness of $\log_2 M$ bits per block, where $M$ is the modulation order of the helper node. Besides the above mentioned drawbacks, we also identify that the RHS does not cater to victims that have variable latency constraints on their messages, i.e., when the victims have the constraint that their messages must be decoded at the base station within a certain deadline.  Thus, considering the practical aspects of reduced transmission rate, high secret-key overhead and rigid latency constraints, we pose the following research questions in this work: \emph{How to design cooperative countermeasures against a reactive adversary that is equipped with generalized energy detectors
1) without compromising the rate of the helper node,
2) with low secret-key overhead for \textcolor{black}{covert communication},
3) \textcolor{black}{with practical hardware at the helper node}, and
4) supporting variable latency-constraints at the victim node?
}

\subsection{Contributions}

For the threat model involving a reactive adversary \cite{ISIT,TVT1}, we present two classes of low secret-key overhead, helper-friendly cooperative mitigation strategies \textcolor{black}{that can jointly accommodate a wide range of latency-constraints on the victim's messages as well as practical radio hardware at the helper nodes}. Instead of asking the victim and the \textcolor{black}{helper node} to opt for a form of TDMA structure as in \cite{ISIT,TVT1}, our methods ask the two nodes to adopt a power-controlled uplink multiple access structure, thereby not incurring rate-loss on the helper node (see Section \ref{SM}). Our specific contributions are listed below. 
 
1) For the regime \textcolor{black}{when the latency constraints on the victim's messages are relaxed compared to the \emph{reaction-time} of the helper node, we propose a full-duplex radio based strategy referred to as the Delay-Tolerant Rate-Three-Fourth (DTRTF) scheme}, wherein the victim and the \textcolor{black}{helper node} cooperatively pour their energies on each other's frequency band to minimally disturb the energy statistics in the network (see Section \ref{DTMS}). Subsequently, we formulate an optimization problem on the energy parameters of the DTRTF scheme so as to achieve both reliable and covert communication with the destination. While the \textcolor{black}{covertness} analysis for DTRTF scheme follows from the contributions in \cite{ISIT,TVT1}, the reliability analysis requires a fresh look owing to the underlying multiple-access frame structure. Therefore, we present a comprehensive error analysis of the DTRTF scheme (see Section \ref{DecodingatBob}), \textcolor{black}{when the helper node uses Phase Shift Keying (PSK) to communicate its information to the destination.}{\footnote{\textcolor{black}{For the adversary with the instantaneous energy detector, PSK provides the worst-case scenario in terms of covertness, owing to the fact that symbols modulated using PSK have a constant energy level. In the later section, we also discuss the performance of the proposed strategies for quadrature amplitude modulation.} }

2) To cater to the case \textcolor{black}{when the victim node has rigid latency constraints compared to the \emph{reaction-time} of the helper node, we propose the Low-Latency Constrained Rate-Three-Fourth (LLCRTF) scheme which does not require a full-duplex radio at the helper node}. Similar to the DTRTF scheme, interesting optimization problems are formulated to establish reliable and covert communication with the destination. While the framework of the LLCRTF scheme is similar to that of the DTRTF scheme, its error analysis is inherently different. As a result, we present a comprehensive error analysis associated with the decoding of the information symbols transmitted by the helper and the victim for the LLCRTF scheme (see Section \ref{LLCA}).

3) Using the reliability and the \textcolor{black}{covertness} analyses on the DTRTF and the LLCRTF schemes, we present low-complexity algorithms to solve the optimization problems on their energy parameters. Subsequently, using the proposed solutions, we show that our schemes can facilitate reliable communication for the victim while maintaining the required energy statistics such as the instantaneous energy and the statistical distribution on the energies of the symbols on all the frequency bands. (see Section \ref{CD}). Finally, when implementing the DTRTF and LLCRTF schemes, we use extensive simulations to compare the individual error performance of the helper and the victim nodes against the state-of-the-art RHS. Our results indicate that while RHS is a victim-friendly scheme, the proposed schemes are more helper-friendly in terms of their error performance (see Section \ref{SR}). \textcolor{black}{Lastly, we discuss the  robustness of the proposed strategies by considering QAM  at the helper node, and arbitrary position of the helper node with respect to the victim and the destination (see Section \ref{Robustness}).}

\textcolor{black}{Overall, the novelty of this work is highlighted in Table \ref{Novelty_table}, wherein the salient features of our contributions are compared with that of \cite{V2, lowlatency, ISIT, TVT1}. From this table, we observe that our proposed strategies outperform the state-of-the-art strategies in most of the listed parameters.} \textcolor{black}{Although a rich set of literature on RIS assisted anti-jamming techniques \cite{RIS_Anti_jam1, RIS_Anti_jam2, RIS_Anti_jam3, RIS_Anti_jam4} exist, these countermeasures are not relevant baselines for our work as they are not designed against reactive adversaries that monitor the victim's countermeasures in terms of long-term and short-term energy statistics. Moreover, those anti-jamming techniques are not applicable when the jammer is placed close to the base-station and the jamming energy is injected in an omnidirectional manner. In contrast, our countermeasures continue to be effective against an omnidirectional jammer as the information bearing signals are bypassed through the frequency band of a helper node.} Our contributions stem from some notable developments in the field of methodologies devised for Full Duplex Radios (FDRs) \cite{SIC1,SIC2,SIC2020}, cognitive radios with FDRs \cite{FDCR1,FDCR2018,FDCR2,FDCR2021} and jamming attacks on FDRs \cite{FDJ1,FDJ2}. A preliminary version of this work is available in \cite{NCC}, which presents a basic variant of the DTRTF scheme without any analysis.

\begin{figure*}%
\vspace{-0.8cm}
\begin{center}
    \subfloat[]
{\includegraphics[scale=0.40]{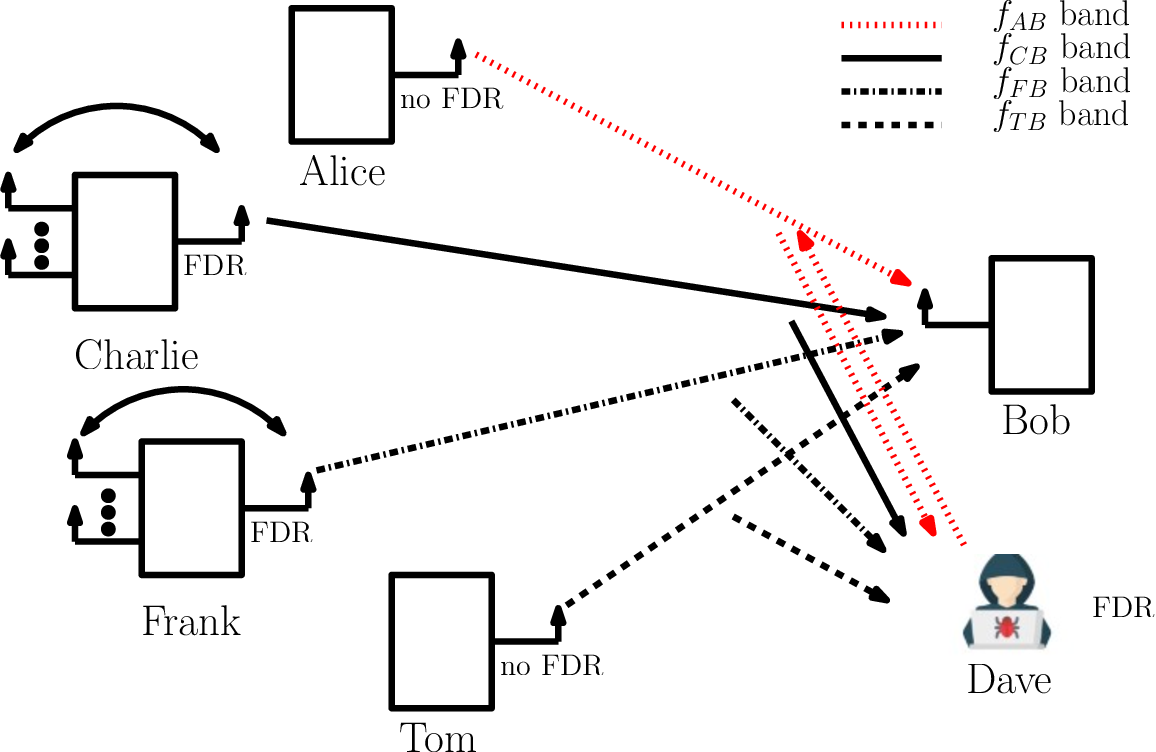}\label{SMdiag}
}
    \hfil
    \subfloat[]{
{\includegraphics[scale=0.40]{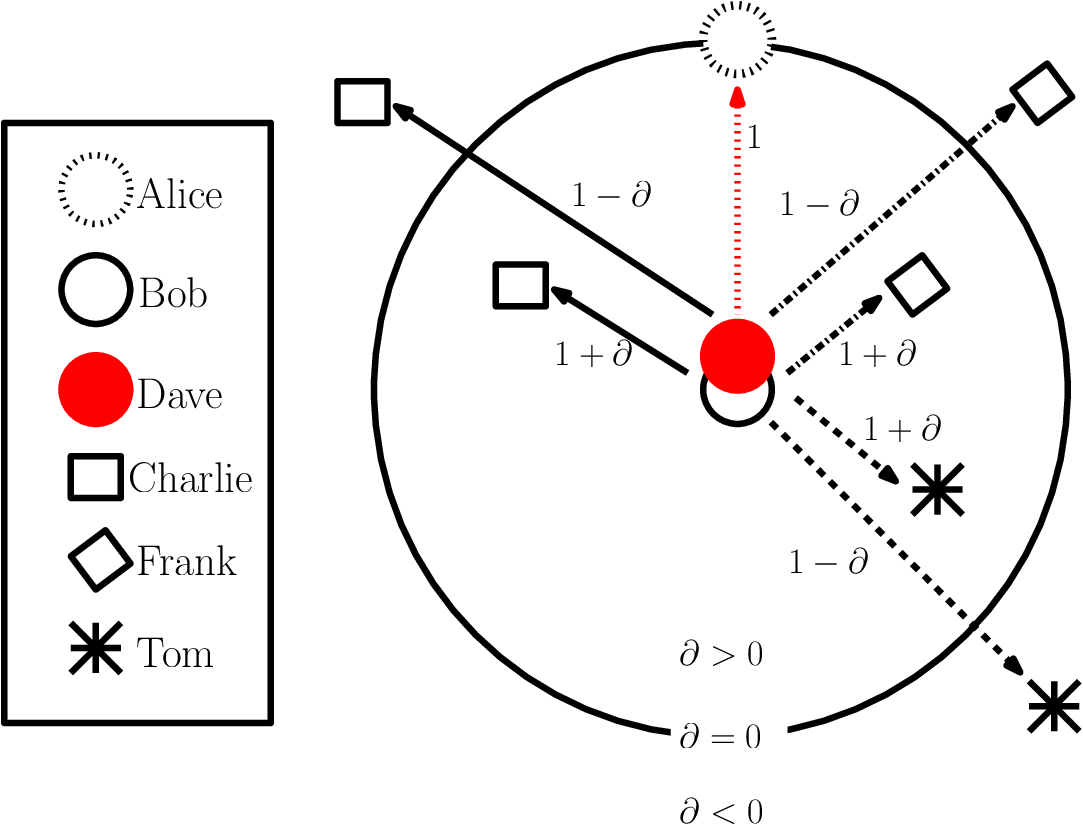}\label{Geometry}}
    \label{geometry}}%
    
    \caption{\textcolor{black}{(a)  Network model depicting uplink communication between the UEs and the base station. The reactive adversary, namely, Dave, is seen injecting jamming energy on the victim Alice, while monitoring all the network frequency bands.
    (b) Geometry capturing the relative positions of the UEs with respect to Bob, wherein, $\partial$ captures the variable for large-scale fading. \textcolor{black}{All the channels experience multi-path fading due to mobility which is captured through small-scale fading.}}}
    \label{Systemmodel}
\end{center}
\vspace{-0.3cm}
\end{figure*}

\color{black}
\section{Network Model} \label{SM}

We consider a single-cell based wireless network, as shown in Fig. \ref{Systemmodel}, wherein a number of user equipments (UEs) communicate with the base station through their uplink channels on orthogonal frequency bands. We assume that the network is crowded and heterogeneous, i.e., all the uplink frequencies of the base station are allocated to the UEs, and the UEs have arbitrary rate requirements for their uplink channels. One such instantiation of the network consists of four UEs, namely, Alice, Charlie, Tom and Frank, that communicate with the base station, namely, Bob \textcolor{black}{over a bandwidth of $W$ Hertz} on the \textcolor{black}{center} frequencies $f_{AB}$, $f_{CB}$, $f_{TB}$ and $f_{FB}$, respectively. The \textcolor{black}{average energies of the UEs per channel use are respectively denoted by $E_{A}, E_{C}, E_{T},$ and $E_{F}$}. 
\textcolor{black}{
We assume that the uplink channels of Alice, Charlie, Tom and Frank are frequency-flat and are quasi-static, with a coherence-time of $\tau_A$, $\tau_C$ ,
$\tau_T$ and $\tau_F$ time-slots, respectively.} Henceforth, one time-slot refers to the time taken for one channel use, which is $\frac{1}{W}$ seconds. Among the four UEs, Alice has critical information to communicate with Bob, which may be delay-tolerant or have strict low-latency constraints to reach the base station. Therefore, she is a potential victim of a DoS attack from an adversary. Also, given that coherent communication for Alice might result in pilot contamination attacks from adversaries, we assume that she communicates with Bob using non-coherent On-Off keying (OOK) modulation. 
\textcolor{black}{This implies that Bob decodes Alice's information symbols without using the instantaneous channel state information (CSI).} 
On the other hand, Charlie, Tom and Frank, which are not directly under adversarial attack, have strict high-rate requirements on their messages, and they communicate with Bob using coherent modulation schemes.
To support coherent demodulation on the $f_{CB}$, the $f_{TB}$ and the $f_{FB}$ bands, we assume that Charlie, Tom and Frank broadcast pilot symbols \cite[Section 6.4.1.1]{3gpp} followed by information symbols within the coherence-time, such that \textcolor{black}{$\tau_{C} = \tau_{C}^{p}+\tau_{C}^{d}$, $\tau_{T} = \tau_{T}^{p}+\tau_{T}^{d}$ and $\tau_{F} = \tau_{F}^{p}+\tau_{F}^{d}$, where $\tau_{C}^{p}$, $\tau_{T}^{p}$ and $\tau_{F}^{p}$, denote the number of time-slots for pilots, and $\tau_{C}^{d}$, $\tau_{T}^{d}$ and $\tau_{F}^{d}$, denote the number of time-slots for data, by Charlie, Tom and Frank, respectively. Here $\tau_{C}^{p}$, $\tau_{T}^{p}$ and $\tau_{F}^{p}$ are chosen such that the corresponding CSI are estimated with reasonable accuracy \cite{PCC1, PCC2}.}   \textcolor{black}{Also, Charlie and Frank have requirement of high spectral-efficiency on their uplink frequencies, as a result, they are equipped with practical FDRs with multiple receive antennas and single transmit antenna, however, there is no such requirement at Tom, therefore, he is not equipped with FDR.}

\subsection{Threat Model} \label{AM}

\textcolor{black}{In the above network, we also assume the presence of a reactive adversary, namely Dave, with single transmit antennas, that intends to execute a DoS threat on the critical messages of Alice. Similar to the threat model in \cite{ISIT,TVT1}, we assume that Dave, who is equipped with an FDR, has the knowledge of the uplink frequency of Alice. As a result, he transmits jamming energy over the $f_{AB}$ band and also monitors the various frequency bands in the network for any possible countermeasures from Alice.}\footnote{\textcolor{black}{Dave is equipped with a perfect FDR, implying perfect SIC. This is to assume a worst-case scenario for covertness.}} In particular, Dave uses the following tools to detect the presence of countermeasures: (i) On every frequency band in the network, he compares the statistical distribution on the energies of the received symbols before and after the attack using a KLD-estimator. (ii) He has complete knowledge of the modulation schemes used by the different UEs, and therefore, he monitors the instantaneous energy of the transmitted symbols over different frequencies of the network by using the CSI between the UEs and himself, \textcolor{black}{which is obtained by listening to the pilots within each coherence block.}
\textcolor{black}{Subsequently, on detecting variations in the generalized energy statistics on the $f_{AB}$ band, Dave suspects a possible countermeasure and then will pour all his jamming energy randomly on one of the other bands. Conversely, if he detects variations in the energy statistics on any other band, he will suspect a countermeasure, and then will pour all his jamming energy on that band. Consequently, in both the scenarios, the communication of other UEs with Bob will get affected,  
as a result, countermeasures should be designed in such a way that the probability of getting detected is minimum, and Dave is engaged only to the $f_{AB}$ band.}

\color{black}
\subsection{Overview of the Proposed Strategies}

To assist reliable and covert communication for Alice, we propose a family of cooperative mitigation strategies wherein Alice takes the assistance from a nearby helper node, \textcolor{black}{who is already using coherent-modulation on its band. First, we ask ask Alice to move to the frequency band of the helper node and transmit her information symbols on the time-slots when the helper is communicating its information.} Subsequently, \textcolor{black}{we ask the incumbent user of that frequency band to listen to her information symbols using an FDR and forward the same to Bob in a different time-slot provided the victim's latency constraints can be met.} The two users coordinate their energy resources such that the short-term and the long-term energy statistics of the underlying frequency bands are minimally disturbed. \textcolor{black}{We highlight that this cooperation between Alice and the helper takes place only in the data transmission phase, while the pilot transmission phase remains unaltered.} \textcolor{black}{In our network model, we can choose either Charlie or Frank as the helper node since both are equipped with FDRs and they are capable of listening to Alice's information symbols while they transmitting their information symbols. Henceforth, \textcolor{black}{in this paper, we denote the frequency band of the helper, the average energy of the helper, the number of time-slots in a coherence-block, number of pilots per coherence-block, number of information symbols per coherence-block of the helper channel by $f_{HB}, E_{H}, \tau_{H}, \tau^{p}_{H}$ and $\tau^{d}_{H}$, respectively, where $f_{HB} \in \{f_{CB}, f_{FB}\}$, $E_H \in \{E_C, E_F \}$, $\tau_H \in \{\tau_C, \tau_F \}$, $\tau^{p}_H \in \{\tau^{p}_C, \tau^{p}_F \}$ and $\tau^{d}_H \in \{\tau^{d}_C, \tau^{d}_F \}$}.}

In contrast to prior contributions, our model considers practical FDRs at the helper node, which incur a delay of several time-slots, to achieve reasonable accuracy in their self-interference cancellation (SIC) modules. \textcolor{black}{To formally capture this delay metric, earlier referred to as \emph{reaction-time} of FDRs}, we define \textcolor{black}{$n_{fd} \in \mathbb{Z}_{+}$} as the maximum number of time-slots required by the helper node to keep the residual interference $\rho$ of his SIC block, below a certain threshold, denoted by $\rho_{th}$, \textcolor{black}{needed for reliably decoding Alice's information symbols.} \textcolor{black}{Given that pilot symbols are transmitted within a coherence-block, Alice's decoded information symbol cannot always be incorporated in the helper's frame right after $n_{fd}$ time-slots. With $\tau_{H}$ time-slots for a coherence-block, it is straightforward to verify that Alice's information symbol can be incorporated with a delay of $n_{fr} \triangleq (\lceil \frac{n_{fd}}{\tau_{H}}\rceil + 1) \tau_{H}$ time slots. Thus, $n_{fr}$ characterizes the effective delay offered by the FDR at the helper node.} Furthermore, given that Alice may have low-latency requirements on her messages, we define the parameter $m \in \mathbb{Z}_{+}$, to capture the maximum number of time-slots for which Bob can wait to decode Alice's symbol. Thus, to cater to diverse requirements on the delay offered by the helper node, and the low-latency constraints on Alice, we design mitigation strategies that are applicable for different combinations of $m \geq 0$ and $\textcolor{black}{n_{fr}} > 0$.
For $m \geq \textcolor{black}{n_{fr}}$ and $m \neq 0$, i.e., when the delay introduced by by the helper node to decode Alice's bits is less than or equal to the maximum number of time-slots for which Bob can wait for Alice's bits, we propose the Delay-Tolerant Rate-Three-Fourth (DTRTF) scheme. On the other hand, for $m < \textcolor{black}{n_{fr}}$ and $m \geq 0$,\footnote{Note that $m=0$ represents the case when Bob wants to decode Alice's messages instantly upon observing the received symbols.} we propose the Low-Latency Constrained Rate-Three-Fourth (LLCRTF) scheme. \textcolor{black}{We highlight that both DTRTF and LLCRTF schemes are applicable for quasi-static channels with arbitrary coherence-time as long as $\tau_H^p$ is sufficient enough to achieve coherent communication, and in such a case, the minimum coherence-time is $\tau_H^p + 1$.} \textcolor{black}{Suppose $\tau_H^p=1$, which captures a quasi-static channel with coherence-time 2 time-slots, Bob uses one pilot symbol to obtain CSI, and in such a case, the obtained CSI would be inaccurate resulting in degraded error performance.}

\begin{figure}[ht!]
   \begin{center}
       {\includegraphics[scale=0.35]{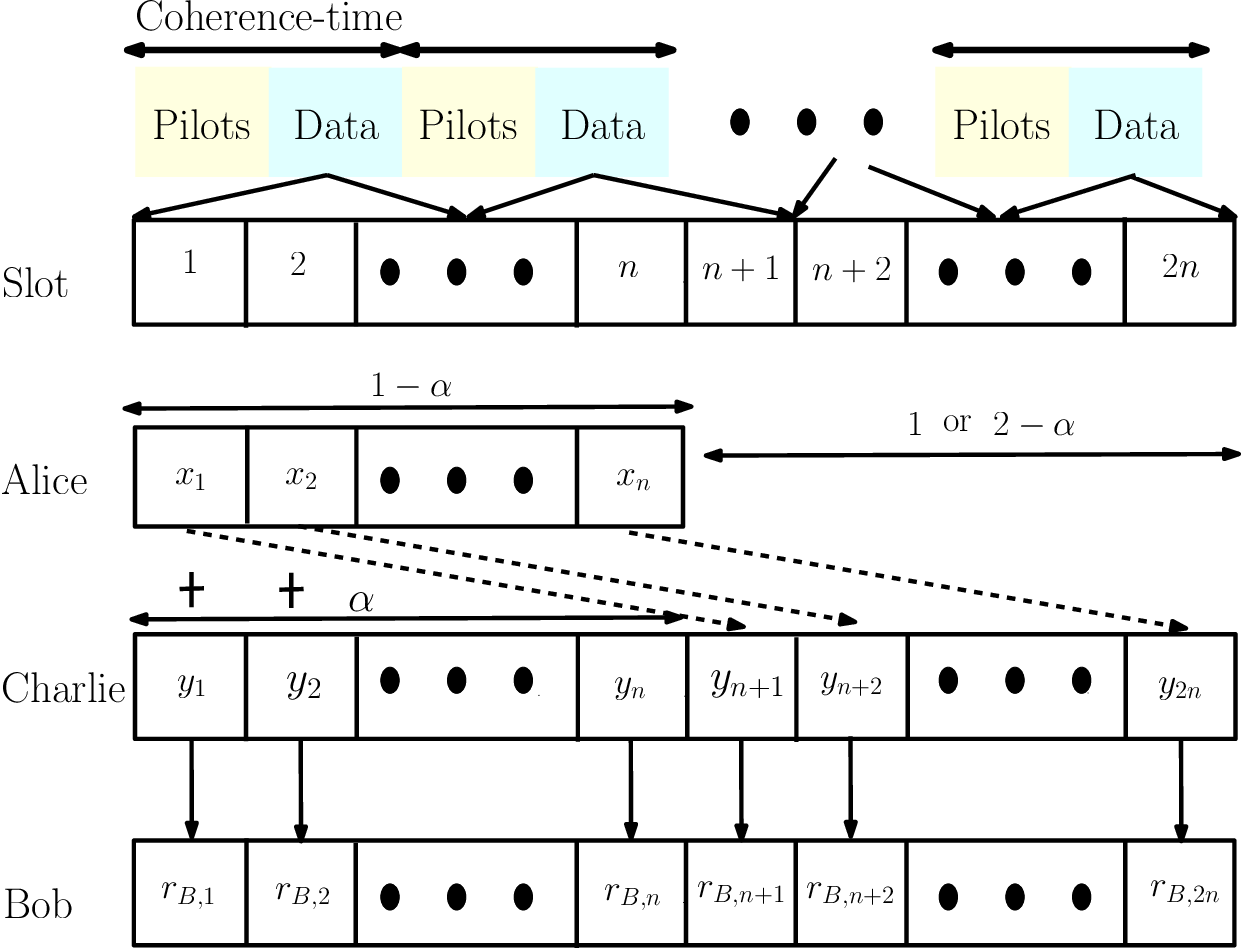}}
   \end{center}  
\caption{\textcolor{black}{Depiction of the frame structure and the corresponding energy levels over $f_{HB}$ in the DTRTF scheme. \textcolor{black}{Only the data transmission phase is captured for exposition.}}}
\label{nfcbdiag}
    \end{figure}

 \begin{figure*}
 \vspace{-0.8cm}
 \begin{center}
     \subfloat[]
{
\includegraphics[scale=0.42]{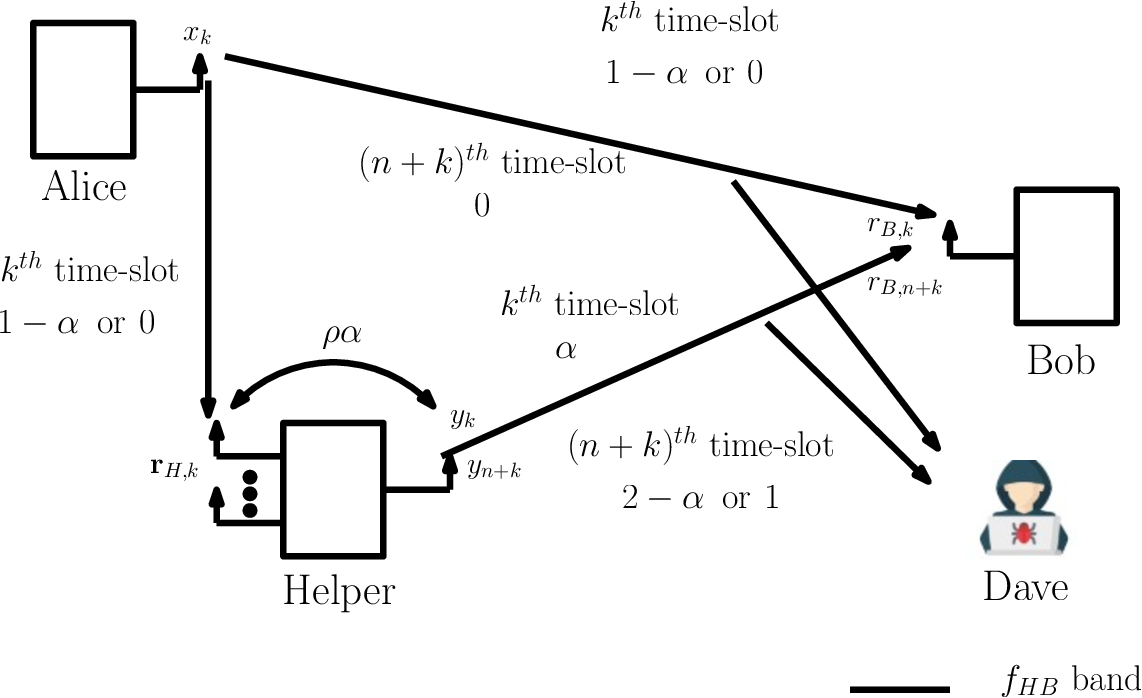}
    \label{fcbkn}}%
    \hfil
    \subfloat[]
{
\includegraphics[scale=0.42]{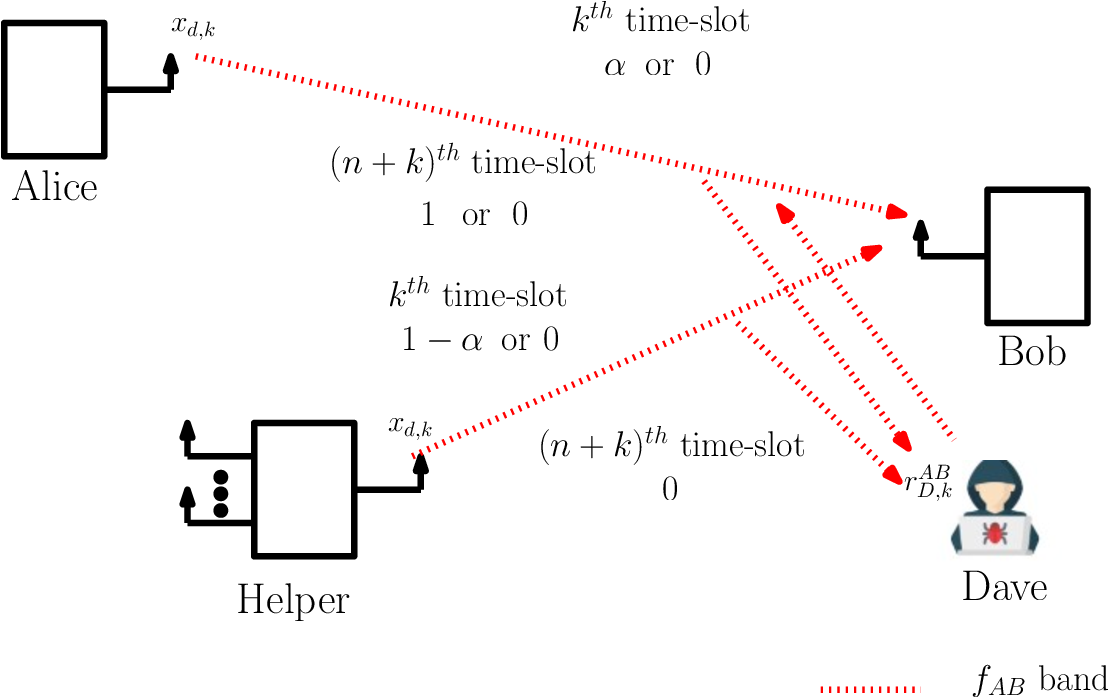}
    \label{fabntimeslots}}%
 \end{center}

    \vspace{-0.3cm}
    \caption{ 
    \textcolor{black}{Representation of the transmission during the $k^{th}$ and $(n+k)^{th}$ time-slot over (a) $f_{HB}$ band, and (b) $f_{AB}$ band.}
    }
    \label{fcb2diags}
\end{figure*}

\section{DTRTF Mitigation Strategy} \label{DTMS}
In this section, we propose the DTRTF scheme,  to cater to the case when $m\geq \textcolor{black}{n_{fr}}$. This strategy allows Bob to decode Alice's bits with a delay of $\textcolor{black}{n_{fr}}$ time-slots. 
\textcolor{black}{Owing to the fact that the pilot transmission phase remains unaltered and the data transmission phase undergoes changes in the proposed strategies, we will only present the data transmission phase of the DTRTF strategy. As shown in Fig. \ref{nfcbdiag}, we will exclude the pilots and only focus on the time-slots for data transmission to aggregate a set of $2n$ symbols on the helper's band. Here $n =\left \lceil \frac{n_{fd}}{\tau_{H}} \right \rceil \tau^{d}_{H}$ refers to the number of helper's information symbols for which Alice's decoded information symbols cannot be incorporated due to the effective delay $n_{fr}$ of the FDR.} Therefore, during the first $n$ time-slots, Alice and the helper form an uplink multiple access channel to transmit their information symbols using a portion of their energies over the $f_{HB}$ band. Meanwhile, they cooperatively pour their remaining energies over the $f_{AB}$ band to generate a pseudorandom OOK sequence using a preshared key. In the subsequent $n$ time-slots, the helper transmits a modified version of his  symbols on the $f_{HB}$ band by incorporating Alice's bits that were decoded $n$ time-slots ago using his FDR. Meanwhile, Alice takes complete charge of transmitting dummy OOK symbols over the $f_{AB}$ band. %The above mentioned cooperative energy allocation for Alice and Charlie over the $f_{AB}$ and the $f_{CB}$ bands during the $2n$ time-slots is quantified by a parameter known as the energy-splitting factor, denoted as $\alpha$. 
Note that the helper transmits his information symbols in all the $2n$ time-slots, whereas Alice transmits her information symbols only in the first $n$ time-slots. As a result, the DTRTF scheme can be viewed as a helper-friendly scheme where the helper retains his transmission rate while Alice takes a hit on her transmission rate by 50\%. For the same reason, we refer to our strategy as the rate-three-fourth scheme since the total number of information symbols transmitted over the two frequency bands across $2n$ time-slots is three-fourth of the number of the transmitted symbols without any countermeasure. Also, we refer to our scheme as delay-tolerant as Bob needs to wait for at least $n$ time-slots to decode Alice's bits.

Next, we will explain the signalling methods over the $f_{HB}$ and $f_{AB}$ bands in detail. 

\subsection{Signalling over $f_{HB}$ Band}

As shown in Fig. \ref{fcbkn}, we ask Alice and the helper to transmit their information symbols using a portion of their energies over $f_{HB}$ during the first $n$ time-slots. During the $k^{th}$ time-slot, where $k \in [n]$, Alice communicates her bits modulated using OOK, which is scaled by a factor $\sqrt{(1-\alpha)E_H}$, while the helper communicates his information symbols scaled by a factor $\sqrt{\alpha E_H}$. Here, $\alpha\in (0,1)$ denotes the energy-splitting factor, which is a design parameter. The received symbol at Bob during the $k^{th}$ time-slot, denoted as $r_{B,k}$, is given by
\begin{eqnarray}{rcl}
\hspace{-4mm} \textcolor{black}{   r_{B,k}=\sqrt{(1-\alpha)E_H}h_{AB,k}x_k+\sqrt{\alpha E_H}y_{k} h_{HB,k}+w_{B,k},}
\end{eqnarray}
where \textcolor{black}{$h_{AB,k}\sim \mathcal{CN}(0, 1)$ is the channel between Alice and Bob that captures small-scale fading,} $x_k\in \{0, 1\}$
denotes Alice's bits, 
\textcolor{black}{$y_k \in \mathcal{S}(j+1)$, $\mathcal{S}$ denotes the complex constellation used by the helper node (this could be $M$-PSK, $M$-QAM), $j \in \{0,1, \ldots, M-1\}$ denotes the 
information symbols transmitted by the helper using a coherent modulation scheme of order $M$},
$h_{HB,k}\sim \mathcal{CN}(0, 1 +\partial)$ is the channel between the helper and Bob,  $w_{B,k}\sim \mathcal{CN}(0, N)$ is the additive white Gaussian noise (AWGN) at Bob, and the subscript $k$ denotes the $k^{th}$
time-slot. 
Since the helper is equipped with an FDR, he receives Alice's signals transmitted over $f_{HB}$, while simultaneously transmitting his information symbols. \textcolor{black}{The received vector at the helper in the $k^{th}$ time-slot, denoted using $\mathbf{r}_{H,k}$, is given by
\begin{eqnarray}{rcl}\label{rc}
\textcolor{black}{\mathbf{r}_{H,k}=\sqrt{(1-\alpha)E_H}\mathbf{h}_{AH,k}x_k+ \mathbf{h}_{HH,k}+\mathbf{w}_{H,k},}
\end{eqnarray}
where $\mathbf{h}_{AH,k}\sim \mathcal{CN}(\mathbf{0}_{N_H}, \sigma^{2}_{AH,k}\mathbf{I}_{N_H})$ is the $N_H \times 1$ channel between Alice and the helper that captures small-scale fading, $\mathbf{h}_{HH,k} \sim \mathcal{CN} (\mathbf{0}_{N_H},\alpha \rho \mathbf{I}_{N_H})$ is  $N_H \times 1$ loop interference (LI) channel \cite{LInew,LInew1,LI} at the helper, and $\mathbf{w}_{H,k}\sim \mathcal{CN}(\mathbf{0}_{N_H}, N \mathbf{I}_{N_H})$ is the AWGN at
the helper.} 
Given that Alice is the victim node, our priority is to ensure utmost reliability in transmitting her bits to Bob. 
To achieve this, the helper decodes Alice's bit transmitted during the $k^{th}$ time-slot, represented as $\bar{x}_k$, based on the received vector $\mathbf{r}_{C,k}$.
Subsequently, owing to $n$ time-slots delay offered by his practical FDR, the helper incorporates this decoded information into his information symbol through energy and phase modifications during the $(n+k)^{th}$ time-slot. If $\bar{x}_k=1$, the helper transmits his information symbol without any modifications, and the received symbol at Bob during the $(n+k)^{th}$ time-slot, denoted as $r_{B,n+k}$, is
\begin{eqnarray}{rcl}
\textcolor{black}{r_{B,n+k}=\sqrt{E_H}h_{HB,n+k}y_{n+k} + w_{B,n+k},}
\end{eqnarray}
where $h_{HB,n+k}\sim \mathcal{CN}(0, 1+\partial)$ is the channel between the helper and Bob \textcolor{black}{that captures small-scale fading}, 
\textcolor{black}{$y_{n+k} \in \mathcal{S}(j+1)$},
$w_{B,n+k}\sim \mathcal{CN}(0, N)$ is the AWGN at Bob,  and  subscript $(n+k)$ denotes the $(n+k)^{th}$ time-slot. 
If $\bar{x}_k=0$, the helper transmits  his information symbol with an energy of $2-\alpha$ and incorporates an additional phase shift of $\theta$, such that 
\begin{eqnarray}{rcl}
\textcolor{black}{r_{B,n+k}=\sqrt{(2-\alpha)E_H}h_{HB,n+k}y_{n+k} e^{
\iota \theta}+w_{B,n+k}.}
\end{eqnarray}
It should be noted that Alice remains inactive during the $(n+k)^{th}$ time-slot over $f_{CB}$ band, while her information symbols are transmitted in all the $2n$ time-slots.

\subsection{Signalling over $f_{AB}$ Band}
As Dave is constantly monitoring the statistical distribution on the energies of the received symbols on the $f_{AB}$ band, it is crucial to maintain OOK signalling in each time-slot. To achieve this, as shown in Fig. \ref{fabntimeslots}, Alice and the helper combine their remaining energies, using a preshared key over  $f_{AB}$ band, creating pseudorandom bit sequences during the first $n$ time-slots. The received symbol at Dave during the $k^{th}$ time-slot, denoted as $r_{D,k}^{AB}$, can be represented as 
\textcolor{black}{
\begin{eqnarray}{rcl}
r_{D,k}^{AB}&=&\sqrt{2E_{A} - (1 - \alpha)E_{H}}h_{AD,k}x_{d,k} \nonumber \\ &+& \sqrt{(1 - \alpha)E_{H}}h_{HD,k}x_{d,k} +w_{D,k}^{AB},
\end{eqnarray}}
where $h_{AD,k}\sim \mathcal{CN}(0, 1)$ is the channel between Alice and Dave \textcolor{black}{that captures small-scale fading}, $x_{d,k}\in \{0, 1\}$
denotes pseudorandom bits, $h_{HD,k}\sim \mathcal{CN}(0, 1+\partial)$ is the channel between the helper and Dave, and $w_{D,k}^{AB}\sim \mathcal{CN}(0, N)$ is the AWGN at Dave. During the $(n+k)^{th}$ time-slot, the helper remains silent over $f_{AB}$ band, whereas Alice transmits a pseudorandom OOK sequence from the alphabet $\{0, \sqrt{2E_A}\}$. \textcolor{black}{We highlight that the proposed strategy is feasible as long as the average energies of Alice and the helper satisfy the relation $E_A \geq \frac{E_H}{2}$, which is necessary to ensure that Alice transmits non-zero energy on  $f_{AB}$ band in $k^{th}$ time-slot.}

\subsection{Salient Features of the DTRTF Scheme}\label{SFsub}
In this subsection, we present the salient features of the DTRTF scheme with respect to the energy statistics maintained over the $f_{AB}$ and $f_{HB}$ bands, and then propose a suitable optimization problem to choose the energy-splitting factor $\alpha$. 

\begin{Proposition}\label{SF}
\textcolor{black}{For the DTRTF scheme, when $\partial = 0$},\footnote{\textcolor{black}{ Robustness analyses on the reliability and covertness for $\partial \neq 0$ are discussed in the later sections.}} 
\begin{enumerate}
    \item the statistical distribution of the energies of the received symbols over the $f_{AB}$ band is identical before and after the countermeasure, for any $\alpha \in (0, 1)$.
    
    \item The average energy of the received symbols over the $f_{CB}$ band  for $2n$ time-slots is unity, which is consistent with the case before the countermeasure, for any $\alpha \in (0, 1)$.
\end{enumerate}
\end{Proposition}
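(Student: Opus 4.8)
The plan is to treat the two claims separately, in each case comparing the relevant received-symbol statistics under the countermeasure against the baseline in which Alice alone transmits OOK on $f_{AB}$ and the helper alone transmits on $f_{HB}$. The key structural fact I will exploit throughout is that, with $\partial=0$, every channel in the network is unit-variance circularly symmetric complex Gaussian, so independent faded contributions add in variance, and the energy $|r|^2$ of any zero-mean complex Gaussian received symbol is exponentially distributed with mean equal to its variance. Hence matching variances (conditioned on the transmitted OOK bit) is enough to match the entire energy distribution.

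For claim (1), I would analyze the two transmission phases on $f_{AB}$ in turn. During the first $n$ time-slots, conditioned on the pseudorandom bit $x_{d,k}=1$, the symbol $r_{D,k}^{AB}$ is the sum of the independent terms $\sqrt{2E_A-(1-\alpha)E_H}\,h_{AD,k}$ and $\sqrt{(1-\alpha)E_H}\,h_{HD,k}$ plus noise; since $h_{AD,k},h_{HD,k}\sim\mathcal{CN}(0,1)$ are independent, the signal variance is $(2E_A-(1-\alpha)E_H)+(1-\alpha)E_H=2E_A$, independent of $\alpha$, so $r_{D,k}^{AB}\sim\mathcal{CN}(0,2E_A+N)$. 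Conditioned on $x_{d,k}=0$ it is $\mathcal{CN}(0,N)$. These are exactly the two conditional laws produced by baseline OOK with alphabet $\{0,\sqrt{2E_A}\}$. During the remaining $n$ time-slots the helper is silent and Alice transmits OOK from $\{0,\sqrt{2E_A}\}$ directly, so the match is immediate. Averaging over the equiprobable bit gives the same two-component mixture of energies in every slot, establishing distributional identity for all $\alpha\in(0,1)$; I would also flag that the feasibility condition $E_A\geq E_H/2$ is what keeps the coefficient $\sqrt{2E_A-(1-\alpha)E_H}$ real.

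For claim (2), the plan is a direct expected-energy computation over the $2n$ slots on $f_{HB}$, again using $\mathbb{E}|h_{HB,k}|^2=1$, $\mathbb{E}|y_k|^2=1$ for PSK, and $\mathbb{E}|x_k|^2=\tfrac12$ for equiprobable OOK, with cross terms vanishing by independence and zero-mean of the channels. In the first $n$ slots both Alice and the helper transmit, contributing $\tfrac{(1-\alpha)E_H}{2}+\alpha E_H=\tfrac{(1+\alpha)E_H}{2}$ per slot; in the second $n$ slots Alice is silent and the helper uses energy $E_H$ or $(2-\alpha)E_H$ with probability $\tfrac12$ each, contributing $\tfrac{(3-\alpha)E_H}{2}$ per slot. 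Averaging the two phases with equal weight gives $\tfrac12\cdot\tfrac{(1+\alpha)E_H}{2}+\tfrac12\cdot\tfrac{(3-\alpha)E_H}{2}=E_H$, i.e. exactly the baseline received energy $E_H(1+\partial)=E_H$, which is unity after normalizing by $E_H$, and the $\alpha$-dependence cancels.

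The routine part is the arithmetic; the part that actually carries the result is recognizing why $\partial=0$ is indispensable. It forces the Alice-to-Bob and helper-to-Bob contributions (and the two contributions at Dave) to share a common channel variance, which is precisely what makes the variance cancellation $2E_A-(1-\alpha)E_H+(1-\alpha)E_H=2E_A$ and the energy balancing to $E_H$ hold \emph{exactly} and independently of $\alpha$. The main obstacle I anticipate is being careful about what the ``energy distribution'' is averaged over: for claim (1) the match must hold after marginalizing over fading, data and noise, which is exactly what the Gaussian-additivity argument delivers, whereas for claim (2) only the first moment must coincide, so a moment computation suffices and no distributional identity on $f_{HB}$ is claimed or needed.
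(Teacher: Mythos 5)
Your proposal is correct and takes essentially the same approach as the paper: the paper's entire proof is the one-line remark that the statements "can be proved by following the signal model of the DTRTF scheme," and your variance-matching argument on $f_{AB}$ (conditional laws $\mathcal{CN}(0,2E_A+N)$ and $\mathcal{CN}(0,N)$ for each bit, using $2E_A-(1-\alpha)E_H+(1-\alpha)E_H=2E_A$ when $\partial=0$) together with the first-moment balance $\tfrac12\cdot\tfrac{(1+\alpha)E_H}{2}+\tfrac12\cdot\tfrac{(3-\alpha)E_H}{2}=E_H$ on the helper's band is exactly that verification carried out explicitly. The only point glossed over on both sides is that the second-phase average tacitly treats the decoded bit $\bar{x}_k$ as equiprobable, which holds exactly only when $P_{01}=P_{10}$ (and approximately at high SNR), an idealization the paper's signal-model argument shares.
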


\begin{proof}
     These statements can be proved by following the signal model of the DTRTF scheme.
\end{proof}

\noindent Although the average energy of the received symbols is maintained over the $f_{HB}$ band for any $0 < \alpha < 1$, it is clear that the statistical distribution of the energies of the received symbols before and after the countermeasure are not identical owing to scaling of the information symbols as a function of $\alpha \in (0, 1)$. Similarly, the instantaneous energy of the received symbols after the countermeasure is also not the same as that before the countermeasure, and moreover it is also a function of $\alpha$. Therefore, it is imperative to characterize the probability with which the DTRTF scheme can be detected at Dave using the KLD-estimator based detector and the instantaneous energy based detector. On the other hand, with respect to the reliability feature offered by the proposed DTRTF scheme, since the information symbols of Alice and the helper are scaled as a function of $\alpha$, the average error-rates of the two users is also a function of $\alpha$. Thus, to ensure high reliability and \textcolor{black}{covertness}, we propose Problem \ref{main_opt_problem_n_time_slots}, which aims to determine the optimal value of $\alpha$, denoted as $\alpha_{opt}^{\Omega}$, which jointly minimizes the sum of $P_{Eavg}^{\Omega}$ and $P_{Davg}^{\Omega}$, where $P_{Eavg}^{\Omega}$ denotes the average probability of decoding error at Bob and $P_{Davg}^{\Omega}$ denotes the average probability of detection at Dave when using the KLD-estimator based detector and the instantaneous energy detector. 
\begin{mdframed}
\begin{problem}
\label{main_opt_problem_n_time_slots}
For a given noise variance N, we solve: $\alpha_{opt}^{\Omega}= \arg \mathop {\min }\limits_{\alpha \in (0, 1)} P_{Eavg}^{\Omega} +P_{Davg}^{\Omega}.$
\end{problem}
\end{mdframed}
In order to address Problem \ref{main_opt_problem_n_time_slots}, it is imperative to derive the expressions for both $P_{Eavg}^{\Omega}$ and $P_{Davg}^{\Omega}$ as a function of $\alpha$. \textcolor{black}{In addition to this, we would like to highlight that the expressions for $P_{Eavg}^{\Omega}$ and  $P_{Davg}^{\Omega}$ also depend on the modulation scheme used by the helper and the corresponding energies $E_A$ and $E_H$.  Henceforth, for the rest of the sections, we set $E_{A} = 0.5$ and $E_{H} = 1$, although all the results can be generalized for any $E_{A}, E_{H}$ such that $E_{A} \geq \frac{E_{H}}{2}$. Furthermore, we pick Charlie as the helper, who uses 
$M$-ary Phase Shift Keying (PSK) to communicate his information symbols to Bob.}\footnote{\textcolor{black}{For the impact of the proposed strategies when QAM is used, we refer the readers to Section \ref{Robustness}.}}
\textcolor{black}{We have chosen $M$-PSK for Charlie, owing to the fact that $M$-PSK offers the worst-case scenario in terms of covertness. 
In the next section, we discuss several decoding strategies at Bob, to arrive at an expression for $P_{Eavg}^{\Omega}$, when the DTRTF scheme is implemented for $\theta =\pi/M$.}

% \clearpage
% \twocolumn
\begin{figure*}
\vspace{-0.5cm}
\begin{small}
\begin{eqnarray}{rcl}\label{JMAP}
\hat{a}_k, \hat{b}_k, \hat{b}_{n+k} &=& \arg \mathop {\max }\limits_{a_k,b_k,b_{n+k}} f\left( r_{B,k},r_{B,n+k} \left| x_k \right.=a_k, y_k=e^{-\frac{\iota 2 \pi b_k}{M}},  y_{n+k}=e^{-\frac{\iota 2 \pi b_{n+k}}{M}}, h_{CB,k},h_{CB,n+k} \right)
\end{eqnarray}
\begin{eqnarray}{rcl}\label{JMAPre}
\hat{a}_k, \hat{b}_k, \hat{b}_{n+k}&=&\arg \mathop {\max }\limits_{a_k,b_k,b_{n+k}} \left[ f_1\left(r_{B,k} \left| x_k \right.=a_k,y_k=e^{-\frac{\iota 2 \pi b_k}{M}}, h_{CB,k}  \right) \right.\nonumber \\ & &
\left.
\sum_{\hat{x}_k \in \{a_k, \bar{a}_k\}} P_{a_k\hat{x}_k} f_{2}\left(r_{B,n+k} \left| \bar{x}_k, y_{n+k} \right.=e^{-\frac{\iota 2 \pi b_{n+k}}{M}}, h_{CB,n+k}  \right) \right]
\end{eqnarray}
\begin{eqnarray}{rcl}\label{fx1}
f\left( r_{B,1},r_{B,2}\left| x_1 \right.=1, y_1=e^{-\frac{\iota 2 \pi b_1}{M}}, y_2=e^{-\frac{\iota 2 \pi b_2}{M}}, h_{CB,1}, h_{CB,2} \right)=f_{11} (P_{11} f_{21} + P_{10} f_{20})
\end{eqnarray}
\begin{eqnarray}{rcl}\label{fx0}
f\left( r_{B,1},r_{B,2}\left| x_1 \right.=0, y_1=e^{-\frac{\iota 2 \pi b_1}{M}}, y_2=e^{-\frac{\iota 2 \pi b_2}{M}}, h_{CB,1}, h_{CB,2} \right)=f_{10} (P_{00}  f_{20} + P_{01} f_{21})
\end{eqnarray}
\end{small}  
\hrule
\end{figure*}
\section{Error Analysis of the DTRTF Scheme} \label{DecodingatBob}

\textcolor{black}{Recall that based on the decoded bit $\bar{x}_k$, Charlie transmits his information symbol during the $(n+k)^{th}$ time-slot to Bob. As a result, the probability of decoding error of Alice's bits at Charlie will influence the probability of decoding error at Bob. In light of this, we will first discuss the error analysis at Charlie.}

\textcolor{black}{Given that Charlie is equipped with FDR, he listens to Alice's bit $x_k$ transmitted during the $k^{th}$ time-slot, while transmitting his own $M$-PSK symbol. Using the received vector at Charlie (refer to \eqref{rc}), SIC block of the FDR reduces the interference level from $\rho$ to $\rho_{th}$ thereby resulting in $\tilde{\mathbf{r}}_{H,k} = \sqrt{(1-\alpha)E_H}\mathbf{h}_{AH,k}x_k+ \tilde{\mathbf{h}}_{HH,k}+\mathbf{w}_{H,k},$ wherein $\tilde{\mathbf{h}}_{HH,k} \sim \mathcal{CN} (\mathbf{0}_{N_H},\alpha \rho_{th} \mathbf{I}_{N_H})$. Subsequently, Charlie sets an optimal threshold, represented as $\tau_{opt}$, which is computed using the various energy levels used by Alice, the second order statistics of the AWGN, and the statistics of the residual interference offered by his FDR. Subsequently, Charlie performs non-coherent energy detection by comparing $\tau_{opt}$ with the energy of the received vector, i.e., $|\tilde{\mathbf{r}}_{H,k}|^2$. The decision rule is as follows: (a) if $|\tilde{\mathbf{r}}_{H,k}|^2 > \tau_{opt}$, then $x_k=1$, and  (b) if $|\tilde{\mathbf{r}}_{H,k}|^2 < \tau_{opt}$, then $x_k=0$. For the mentioned non-coherent energy detector, the probability of decoding bit-0 as bit-1 and bit-1 as bit-0, are denoted using $P_{01}$ and $P_{10}$, respectively. We highlight that $P_{01}$ and $P_{10}$ inherently takes into account the residual interference of the FDR. Given that the probability of decoding error at Bob depends on the probability of decoding error at Charlie, we perform the error analysis at Bob by assuming that Bob have complete knowledge about $P_{01}$ and $P_{10}$.}

Now, we will discuss a practical decoding strategy at Bob to retrieve both Alice's  and Charlie's symbols transmitted on the $f_{CB}$ band.
Considering that Alice's information is transmitted across both $k^{th}$ and $(n+k)^{th}$ time-slot, Bob can employ an optimal joint decoder to recover $x_k$, $y_k$, and $y_{n+k}$ from the received symbols $r_{B,k}$ and $r_{B,n+k}$. 
In such a case, the decoding metric of the Joint Maximum A Posteriori (JMAP) decoder is given by \eqref{JMAP}, 
where $f\left( r_{B,k},r_{B,n+k} | x_k, y_k,  y_{n+k}, h_{CB,k},h_{CB,n+k} \right)$ is the conditional probability density function (CPDF) of $r_{B,k}$ and $r_{B,n+k}$, given $x_k$, $y_k$, $y_{n+k}$, $h_{CB,k}$ and $h_{CB,n+k}$. In \eqref{JMAP}, $a_k\in \{0,1\}$, while $b_k,b_{n+k}\in\{0, 1,..., M-1\}$, thereby representing the search space of the JMAP decoder.
Given the statistical independence of the symbols $r_{B,k}$ and $r_{B,n+k}$ conditioned on $x_k$, $\bar{x}_k$, $y_k$, $y_{n+k}$, $h_{CB,k}$ and $h_{CB,n+k}$, the JMAP decoder can be reformulated as \eqref{JMAPre},
where $f_1\left(r_{B,k} \left| x_k \right.=a_k, y_k=e^{-\frac{\iota 2 \pi b_k}{M}}, h_{CB,k} \right)$ is the CPDF of $r_{B,k}$ given $x_k$, $y_k$ and $h_{CB,k}$,  and 
$f_2\left(r_{B,n+k} \left| \bar{x}_k, y_{n+k} \right.=e^{-\frac{\iota 2 \pi b_{n+k}}{M}}, h_{CB,n+k}  \right)$ is the CPDF of $r_{B,n+k}$ given $\bar{x}_k$, $y_{n+k}$, $h_{CB,n+k}$, and $\bar{a}_k$ represents the complement of $a_k$. 
Henceforth, for exposition, without the loss of generality, we assume $k=1$ and $n=1$ for the ease of notations in our error analysis.
Consequently, ${r}_{B,1}$
is received on time-slot 1 and ${r}_{B,2}$ is received on time-slot 2.
The CPDF of ${r}_{B,1}$ given $x_1 = 1$, denoted as $f_{11}\left({r}_{B,1}|x_1 = 1,y_1=e^{-\frac{\iota 2 \pi b_1}{M}}, h_{CB,1}\right)$, and the CPDF of ${r}_{B,1}$ given $x_1 = 0$, denoted as $f_{10}\left({r}_{B,1}|x_1 = 0,y_1=e^{-\frac{\iota 2 \pi b _1}{M}}, h_{CB,1}\right)$, based on the transmitted bit $x_1$, take the complex Gaussian structures with variance $N_{1b}=N+1-\alpha$ and $N_{0b}=N$, respectively. Similarly, the CPDF of $r_{B,2}$ given $\bar{x}_1$, $y_2$, and $h_{CB,2}$, denoted as $f_{21}\left(r_{B,2} \left| \bar{x}_1=1, y_2 \right.=e^{-\frac{\iota 2 \pi b_2}{M}}, h_{CB,2} \right)$ for $\bar{x}_1=1$, and $f_{20}\left(r_{B,2} \left| \bar{x}_1=0, y_2 \right.=e^{-\frac{\iota 2 \pi b_2}{M}}, h_{CB,2} \right)$ for $\bar{x}_1=0$, can be determined based on $\bar{x}_1$, and take the complex Gaussian structure with variance $N_{0b}=N$, however, with different mean values based on $\bar{x}_1$. The overall CPDF for $x_1=1$ and $x_1=0$ 
\begin{comment}
  $f\left( r_{B,1},r_{B,2}\left| x_1 \right.=1, y_1=e^{-\frac{i 2 \pi b_1}{M}}, y_2=e^{-\frac{i 2 \pi b_2}{M}}, h_{CB,1},h_{CB,2} \right)$ and $f\left( r_{B,1},r_{B,2}\left| x_1 \right.=0, y_1=e^{-\frac{i 2 \pi b_1}{M}}, y_2=e^{-\frac{i 2 \pi b_2}{M}}, h_{CB,1},h_{CB,2} \right)$  
\end{comment} 
can be expressed as given in \eqref{fx1} and \eqref{fx0}, respectively.
In the above equations, $P_{ij}$ presents the probability that Charlie decodes Alice's bit-$i$ as bit-$j$, where $i, j \in \{0, 1\}$ \cite{V3,V4}.
Note that \eqref{fx1} and \eqref{fx0} composes of Gaussian mixtures that are scaled by probabilities of decoding error introduced by Charlie. Due to the complexities associated with handling Gaussian mixtures, calculating the overall probability of decoding error using the JMAP decoder defined in \eqref{JMAPre} becomes challenging.
To address this, we propose a sub-optimal decoder known as Sub-Optimal DTRTF (SODTRTF) decoder, where Bob decodes Charlie's first symbol in time-slot 1, and then jointly decodes Alice's bit and Charlie's second symbol in time-slot 2. To solely decode Charlie's transmitted symbol $y_1$ during time-slot 1, Bob obtains $\hat{a}_1$, $\hat{b}_1$ given by 

\vspace{-0.3cm}
\begin{small}
\begin{eqnarray}{rcl}\label{Dtimeslot1}
\hspace{-5mm}\hat{a}_1,\hat{b}_1&=& \arg \mathop {\max }\limits_{a_1,b_1} f_1\left(r_{B,1} \left| {x}_1=a_1, y_1 \right.=e^{-\frac{\iota 2 \pi b_1}{M}}, h_{CB,1}  \right),
\end{eqnarray} 
\end{small}

\noindent and discards $\hat{a}_1$.
Subsequently, during time-slot 2, Bob jointly decodes Alice's bit $x_1$ and Charlie's symbol $y_2$ as 

\vspace{-0.3cm}
\begin{small}
\begin{eqnarray}{rcl}\label{D21}
\hspace{-5mm}\hat{a}_1, \hat{b}_2 &=& \arg \mathop {\max }\limits_{a_1,b_2} f_2\left(r_{B,2} \left| {x}_1=a_1, y_2 \right.=e^{-\frac{\iota 2 \pi b_2}{M}}, h_{CB,2}  \right).
\end{eqnarray}
\end{small}

\noindent In contrast to the JMAP decoder, we highlight that the decoding metrics in \eqref{Dtimeslot1} and \eqref{D21} are analytically tractable for analyzing their error performance. 
For a given $h_{CB,1}$ and $h_{CB,2}$, if $P_{E1}^{\Omega}$ and $P_{E2}^{\Omega}$ represent the probability of decoding error associated with time-slot 1 and time-slot 2, respectively, then the overall probability of decoding error of the SODTRTF decoder, denoted as $P_{E}^{\Omega sub}$, can be upper bounded as
\begin{eqnarray}{rcl}\label{PesubD}
P_{E}^{\Omega sub} \leq P_{E1}^{\Omega}+P_{E2}^{\Omega}.
\end{eqnarray}

\noindent Next, we provide upper bounds on $P_{E1}^{\Omega}$ and $P_{E2}^{\Omega}$.

\subsection{Upper Bound on $P_{E1}^{\Omega}$} 
With $a_1,b_1$ representing the indices jointly chosen by Alice and Charlie, the decoding metric in \eqref{Dtimeslot1} introduces an error event if $(b_1 \neq \hat{b}_1)$ and $(a_1=\hat{a}_1) \cup (a_1 \neq \hat{a}_1)$.
If the probability of this error event is denoted as $\Pr \left(\hat{b}_1\neq b_1\right)$, then the overall probability of decoding error for $y_1$, conditioned on $h_{CB,1}$, is given by
\begin{eqnarray}{rcl}\label{Pesumsum}
\hspace{-5mm}P_{E1}^{\Omega} = \frac{1}{2M} \sum_{ b_1 \in \{0, 1, \ldots, M-1\}} \Pr\left(\hat{b}_1\neq b_1\right).
\end{eqnarray} 
Prior to calculating every term of the right-hand-side of \eqref{Pesumsum}, in Fig. \ref{Constellation_diagram_TS1}, we present a 2-dimensional  constellation diagram depicting the superposition of the information symbols of Alice and Charlie during  time-slot 1 for the case when $M=4$. 
 The dots and circles within the diagram correspond to the $M$-PSK constellation, for $x_1=0$ and $x_1=1$, respectively. In particular, information symbols $(a_1, b_1)$ of the form $(0, b_1)$ are represented as dots, whereas the information symbols $(a_1, b_1)$ of the form $(1, b_1)$ are represented as circles. Moving forward, we will refer to Alice's bits and Charlie's symbols interchangeably as $(a_1, b_1) \in \{0, 1\} \times \{0, 1, \ldots, M-1\}$ or as points within the constellation diagram, as these two representations share a one-to-one mapping.

In the next theorem, we use this one-to-one mapping on the representations to present an upper bound on $
P_{E1}^{\Omega}$.

\begin{figure}[ht!]%
\begin{center}
        \subfloat[]
{\includegraphics[scale=0.27]{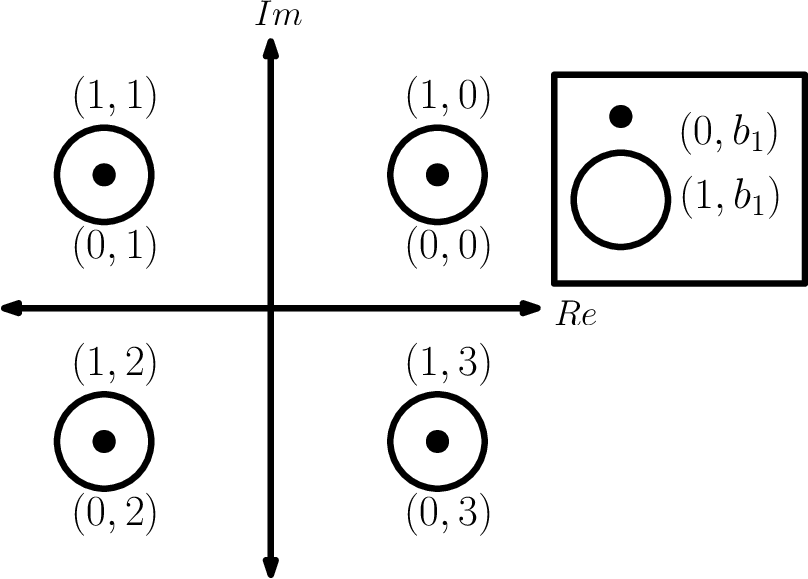}\label{Constellation_diagram_TS1}}
    \hfil
    \subfloat[]
{\includegraphics[scale=0.27]{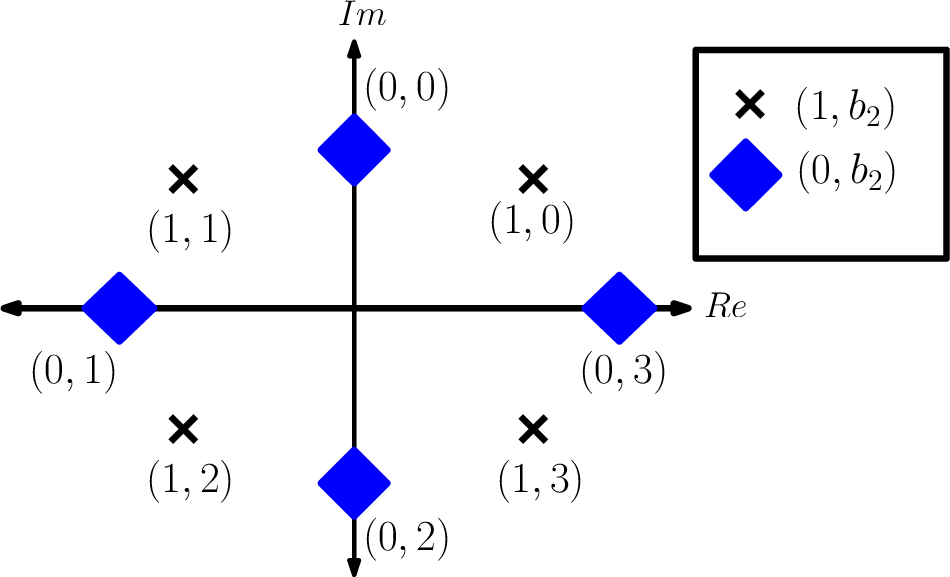}\label{Constellation_diagram_TS2}}
    
\end{center}
    \caption{\textcolor{black}{(a) Constellation diagram illustrating the
    symbols jointly communicated by Alice and Charlie during time-slot 1, presented in the form of $(x_1, y_1)$. (b) 
    Constellation diagram illustrating the symbols communicated by Charlie during time-slot 2, presented in the form of $(\bar{x}_1, y_2)$.}}
    \label{Constdiagram}
    \vspace{-0.3cm}
\end{figure}
\begin{theorem}\label{T1}
At high SNR i.e.,  $N \ll 1$, an upper bound on $P_{E1}^{\Omega}$ conditioned on $h_{CB,1}$, denoted as $P_{E1}^{\Omega ub}$, is given by 
\begin{eqnarray}{rcl}\label{T1eqn}
 P_{E1}^{\Omega ub}&=&\Pr[(1,1)\rightarrow(0,0)]+\Pr[(1,1)\rightarrow(1,0)] \nonumber \\&+&\Pr[(0,1)\rightarrow(0,0)]+\Pr[(0,1)\rightarrow(1,0)],   
\end{eqnarray}
where $\Pr[(a_1,b_1)\rightarrow(\bar{a}_1,\bar{b}_1 )]$ denotes the probability that $(a_1,b_1)$ is decoded as $(\bar{a}_1,\bar{b}_1)$, when $(a_1,b_1)$ is transmitted by Alice and Charlie.
\end{theorem}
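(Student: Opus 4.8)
The plan is to start from the exact average error expression \eqref{Pesumsum} and exploit the rotational symmetry of the $M$-PSK constellation to collapse the sum over $b_1$. First I would observe that, conditioned on $h_{CB,1}$, the density $f_1$ is a circularly-symmetric complex Gaussian whose mean $\sqrt{\alpha E_H}\,h_{CB,1}\,e^{-\iota 2\pi b_1/M}$ depends only on Charlie's phase index $b_1$, while its variance ($N_{1b}=N+1-\alpha$ for $a_1=1$ or $N_{0b}=N$ for $a_1=0$) depends only on Alice's bit $a_1$. Hence Alice's bit perturbs only the spread of each Gaussian blob, leaving the dot (for $x_1=0$) and circle (for $x_1=1$) co-located at the same phase, exactly as depicted in Fig. \ref{Constellation_diagram_TS1}. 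Because the noise is circularly symmetric and the $M$ mean locations are related by rotations of $2\pi/M$, the conditional phase-error probability is invariant to $b_1$; this lets me replace the sum over the $M$ phases by $M$ copies of a single representative, say $b_1=1$, cancelling the $M$ in the $\tfrac{1}{2M}$ prefactor and leaving $P_{E1}^{\Omega}=\tfrac{1}{2}\sum_{a_1\in\{0,1\}}\Pr(\hat b_1\neq 1\mid a_1,b_1=1)$.

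Next I would upper-bound each conditional phase-error event by a union bound over all joint candidates with $\bar b_1\neq 1$, namely $\Pr(\hat b_1\neq 1\mid a_1,1)\le \sum_{\bar b_1\neq 1}\sum_{\bar a_1}\Pr[(a_1,1)\to(\bar a_1,\bar b_1)]$. At high SNR ($N\ll 1$) each pairwise probability decays with the separation between the corresponding clusters, so the transitions to the two phase-adjacent clusters $\bar b_1\in\{0,2\}$ dominate while all non-adjacent transitions are exponentially smaller and can be discarded. Phases $0$ and $2$ are reflections of phase $1$ about the direction of $\mu_1=\sqrt{\alpha E_H}h_{CB,1}e^{-\iota 2\pi/M}$, so by the same circular symmetry $\Pr[(a_1,1)\to(\bar a_1,2)]=\Pr[(a_1,1)\to(\bar a_1,0)]$, and the adjacent-phase contribution equals $2\sum_{\bar a_1}\Pr[(a_1,1)\to(\bar a_1,0)]$. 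Substituting this into the reduced expression, the factor $2$ cancels the $\tfrac{1}{2}$, and summing over the transmitted bit $a_1\in\{0,1\}$ and the decoded bit $\bar a_1\in\{0,1\}$ yields precisely the four nearest-neighbour terms of \eqref{T1eqn}.

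The main obstacle I anticipate is the high-SNR truncation, for two reasons. First, because the dot and circle at a given phase carry unequal variances $N_{0b}\neq N_{1b}$, the decision boundary between a transmitted cluster and a candidate cluster is an Apollonius circle rather than a straight bisector, so the pairwise probabilities $\Pr[(a_1,1)\to(\bar a_1,\bar b_1)]$ are not plain $Q$-functions; I must argue that their dominant exponential scaling is nonetheless controlled by the inter-cluster separation, so that the nearest-phase terms dominate as $N\to 0$. Second, I must reconcile the truncation with the \emph{upper-bound} claim: since the union bound is itself a genuine upper bound and the discarded terms are of smaller order than the retained ones at high SNR, the four-term expression serves as the asymptotically tight upper bound $P_{E1}^{\Omega ub}$. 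The co-location of each dot/circle pair (so that a same-phase bit flip is never counted as a phase error) together with the rotational symmetry are the structural facts that close the bookkeeping at exactly four terms, and I would lean on Fig. \ref{Constellation_diagram_TS1} to make the nearest-neighbour identification transparent.
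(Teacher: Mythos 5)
Your proposal is correct and takes essentially the same route as the paper, which states Theorem \ref{T1} without a written proof and relies on exactly the argument you reconstruct: the rotational-symmetry collapse of \eqref{Pesumsum} to a single representative phase, a union bound over joint candidates $(\bar{a}_1,\bar{b}_1)$, and the high-SNR truncation to the two adjacent-phase clusters, whose reflection symmetry about the transmitted mean gives the factor $2$ that cancels the $\tfrac{1}{2}$ and leaves precisely the four nearest-neighbour terms of \eqref{T1eqn}. Your explicit bookkeeping of the co-located unequal-variance dot/circle pairs (same-phase bit flips not counted as errors, Apollonius-type pairwise boundaries rather than bisectors) is consistent with the pairwise expressions the paper subsequently records in Proposition \ref{PA}.
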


\begin{figure*}[ht!]
\vspace{-0.3cm}
\textcolor{black}{
\begin{eqnarray}{rcl}\label{Prop2exp}
\hspace{-5mm}  P_{A1}=1-Q_{1}\left(\sqrt{N_{1b} r},\sqrt{Ns}\right),  
P_{A2}=Q\left(\sqrt{\frac{\alpha}{N_{1b}}},|h_{CB,1}|\right), 
P_{A3}=Q\left(\sqrt{\frac{\alpha}{N}},|h_{CB,1}|\right), 
P_{A4}=Q_{1}\left(\sqrt{Nr},\sqrt{N_{1b}s}\right) 
\end{eqnarray}
   \begin{eqnarray}{rcl} \label{TH3exp1}
& & P_{A1}^{avg}=\left(\frac{N_{1b} \sqrt{\alpha}}{(1-\alpha)^2}+1\right)^{(-1)},
P_{A2}^{avg}=\sum_{i=1}^{3}\frac{k_i}{\frac{t_i \alpha}{N_{1b}}+1},
P_{A3}^{avg} =\sum_{i=1}^{3}\frac{k_i}{\frac{t_i \alpha}{N}+1},
P_{A4}^{avg}=\left(\frac{N}{N_{1b}}\right)^{\frac{N_{1b}}{1-\alpha}}P_{A1}^{avg}
\end{eqnarray} 
   \begin{eqnarray}{rcl} \label{TH3exp2}
P_{B}^{avg}=\sum_{i=1}^3 \frac{k_{i}}{\frac{t_{i}d}{2N}+1},
P_{B2}^{avg}=\sum_{i=1}^3 \frac{k_{i}}{\frac{t_{i}}{2N}+1}
\end{eqnarray} }
\hrule
\end{figure*}

Henceforth, for the ease of notations, we denote
 $\Pr[(1, 1) \rightarrow(0, 0)]$, $\Pr[(1,1)\rightarrow(1,0)]$, $\Pr[(0,1)\rightarrow(0,0)]$ and $\Pr[(0,1)\rightarrow(1,0)]$ using $P_{A1}$, $P_{A2}$, $P_{A3}$ and $P_{A4}$, respectively. 
In the following proposition, we provide the expressions for $P_{A1}$,  $P_{A2}$, $P_{A3}$ and $P_{A4}$, which can be derived from their definitions using first principles.

\begin{Proposition} \label{PA}
For a given $\alpha$, $N$, and $h_{CB,1}$, the expressions of $P_{A1}$, $P_{A2}$, $P_{A3}$ and $P_{A4}$ are given by \eqref{Prop2exp}, where $r=\frac{2\sqrt{\alpha} |h_{CB,1}|^2}{(1-\alpha)^2}$,  $s=\frac{2}{1-\alpha}\left(ln \frac{N_{1b}}{N}+ \frac{\sqrt{\alpha} |h_{CB,1}|^2}{1-\alpha} \right)$, and $Q_1(.,.)$ and $Q(.)$ denote Marcum-Q function of first order and Q-function, respectively.

\end{Proposition}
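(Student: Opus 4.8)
The plan is to treat each of $P_{A1},\dots,P_{A4}$ as a pairwise error probability of the time-slot-1 joint decoder in \eqref{Dtimeslot1} and to evaluate it from first principles. The starting point is the structural observation that, conditioned on $(a_1,b_1)$ and $h_{CB,1}$, the received sample $r_{B,1}$ is a circularly-symmetric complex Gaussian whose \emph{mean} $\sqrt{\alpha}\,e^{-\iota 2\pi b_1/M}h_{CB,1}$ depends only on Charlie's phase index $b_1$, while its \emph{variance} depends only on Alice's bit $a_1$, being $N_{1b}=N+1-\alpha$ when $a_1=1$ and $N_{0b}=N$ when $a_1=0$. Hence each $\Pr[(a_1,b_1)\to(a_1',b_1')]$ is the probability that the Gaussian likelihood of the wrong hypothesis exceeds that of the true one, i.e.\ that the log-likelihood ratio (LLR) crosses zero, and the whole proof is a case analysis of this LLR according to whether the competing symbols share the same $a$-value (equal variance) or not (unequal variance).

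For the phase-only errors $P_{A2}=\Pr[(1,1)\to(1,0)]$ and $P_{A3}=\Pr[(0,1)\to(0,0)]$ the two hypotheses have identical variance ($N_{1b}$ and $N$, respectively), so the LLR is affine in $r_{B,1}$ and the decision boundary is the perpendicular bisector of the two means. Projecting the complex Gaussian onto the line joining the means reduces the event to a scalar real-Gaussian tail, giving a $Q$-function of the half-distance $\tfrac12|\mu_1-\mu_0|=\sqrt{\alpha}\,|h_{CB,1}|\sin(\pi/M)$ normalized by the per-real-dimension noise deviation $\sqrt{N_{1b}/2}$ (resp.\ $\sqrt{N/2}$). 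This directly produces the $Q$-function expressions quoted for $P_{A2}$ and $P_{A3}$.

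For the amplitude-and-phase errors $P_{A1}=\Pr[(1,1)\to(0,0)]$ and $P_{A4}=\Pr[(0,1)\to(1,0)]$ the variance changes between the two hypotheses, so the LLR is \emph{quadratic} in $r_{B,1}$. Completing the square shows that the error region is a disk $\{|r_{B,1}-c|<R\}$ for $P_{A1}$ or its complement $\{|r_{B,1}-c|>R\}$ for $P_{A4}$, with center $c=\tfrac{N_{1b}\mu_0-N\mu_1}{1-\alpha}$; the sign of the coefficient of $|r_{B,1}|^2$, namely $\mp\tfrac{1-\alpha}{N N_{1b}}$, is precisely what decides inside-versus-outside and hence the $1-Q_1(\cdot)$ form for $P_{A1}$ against the bare $Q_1(\cdot)$ form for $P_{A4}$. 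Under the true hypothesis $r_{B,1}$ is complex Gaussian with variance $N_{1b}$ (for $P_{A1}$) or $N$ (for $P_{A4}$), so twice the squared deviation $|r_{B,1}-c|^2$ divided by this variance is non-central chi-square with two degrees of freedom, whose tail is exactly a first-order Marcum-$Q$ function. Substituting $\mu_0,\mu_1$ together with $N_{1b}-N=1-\alpha$ collapses the non-centrality and threshold into the stated arguments $(\sqrt{N_{1b}r},\sqrt{Ns})$ and $(\sqrt{Nr},\sqrt{N_{1b}s})$; the log--variance-ratio term $\ln\frac{N_{1b}}{N}$ appearing inside $s$ is the unmistakable fingerprint of the unequal-variance LLR threshold.

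I expect the unequal-variance cases to be the main obstacle. The delicate steps are (i) completing the square correctly to identify $c$ and $R$ while carrying the $\ln\frac{N_{1b}}{N}$ term, and (ii) matching the resulting integral of a complex Gaussian over a disk to the canonical representation $Q_1(a,b)=\int_b^\infty t\,e^{-(t^2+a^2)/2}I_0(at)\,dt$ and verifying that, after accounting for the per-real-dimension normalization, everything condenses to exactly the $r$ and $s$ defined in the statement. Tracking these normalizations and the sign of the quadratic coefficient without error is where essentially all the work lies; the equal-variance cases are then routine.
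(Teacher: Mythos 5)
Your proposal is correct and takes essentially the same approach as the paper: the paper offers no written derivation for Proposition \ref{PA} beyond asserting that the expressions follow from first principles, and your case analysis --- an affine LLR and perpendicular-bisector projection giving the $Q$-function terms for the equal-variance events $P_{A2},P_{A3}$, versus a quadratic LLR whose completed square (with center $c=\frac{N_{1b}\mu_0-N\mu_1}{1-\alpha}$ and the $\ln\frac{N_{1b}}{N}$ threshold term) yields a disk or its complement and hence the $1-Q_1(\cdot,\cdot)$ versus $Q_1(\cdot,\cdot)$ forms via the non-central chi-square tail for $P_{A1},P_{A4}$ --- is exactly that derivation. One caveat when you match constants: the quoted expressions in \eqref{Prop2exp} carry no $\sin(\pi/M)$ and are implicitly specialized to the $M=4$ constellation of Fig.~\ref{Constellation_diagram_TS1} (your general argument $\sqrt{2\alpha/N_{1b}}\,|h_{CB,1}|\sin(\pi/M)$ reduces to the stated $\sqrt{\alpha/N_{1b}}\,|h_{CB,1}|$ only at $M=4$), so carry out the final substitution of $|\mu_1-\mu_0|^2=4\alpha\sin^2(\pi/M)|h_{CB,1}|^2$ explicitly rather than assuming it collapses verbatim to the stated $r$ and $s$.
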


Now that we have an upper bound on $P_{E1}^{\Omega}$, the next task is to derive an upper bound on $P_{E2}^{\Omega}$.

\subsection{Upper Bound on $P_{E2}^{\Omega}$}
With $a_1,b_2$ representing the indices jointly chosen by Alice and Charlie, the decoding metric in \eqref{D21} raises an error event, denoted by $\Delta^{(1)}_{(a_1,b_2)\rightarrow (\hat{a}_1,\hat{b}_2)}$, is given by, 

\begin{small}
   \begin{eqnarray}{rcl} \label{delta2}
\Delta^{(1)}_{(a_1,b_2)\rightarrow (\hat{a}_1,\hat{b}_2)} \triangleq \frac{f_{2}\left(r_{B,2} \left | {x_1}=a_1, y_2 \right.=e^{-\frac{\iota 2 \pi b_2}{M}}, h_{CB,2}  \right)}{f_{2}\left(r_{B,2} \left | {x_1}=\hat{a}_1, y_2 \right.=e^{-\frac{\iota 2 \pi \hat{b}_2}{M}}, h_{CB,2}  \right)} \leq 1, \nonumber \\
\end{eqnarray}
\end{small}

\noindent when $(\hat{a}_1 \neq a_1)$ or $(\hat{b}_2 \neq b_2)$.
Using $\Delta^{(1)}_{(a_1,b_2)\rightarrow (\hat{a}_1,\hat{b}_2)}$, the overall probability of decoding error for $x_1$ and $y_2$, conditioned on $h_{CB,2}$ is 
 \begin{eqnarray}{rcl}\label{TS2sum}
 P_{E2}^{\Omega} = \frac{1}{2M} \sum_{a_1=0}^{1} \sum_{b_2=0}^{M-1} \Pr\left(\Delta^{(1)}_{(a_1,b_2)\rightarrow (\hat{a}_1,\hat{b}_2)}\right).
\end{eqnarray}  
%\sum_{a_1 \in \{0, 1\}, b_2 \in \{0, 1, \ldots, M-1\}}
\noindent Prior to analyzing every term in the right-hand-side of \eqref{TS2sum}, in Fig. \ref{Constellation_diagram_TS2}, we visualize a 2-dimensional constellation diagram representing Charlie's symbols transmitted during time-slot 2 for $M=4$.
In Fig. \ref{Constellation_diagram_TS2}, the black crosses and the blue diamonds represent the $M$-PSK constellation, for   $\bar{x}_1=1$ and $\bar{x}_1=0$, respectively.
As we proceed, we will refer to Alice's and Charlie's symbols as $(a_1, b_2) \in \{0, 1\} \times \{0, 1, \ldots, M-1\}$, or as points within the constellation, as these two share a one-to-one mapping. 

In the next theorem, we present an upper bound on $
P_{E2}^{\Omega}$ by using the one-to-one mapping on the representations.

\begin{theorem}\label{Pe2th}
At high SNR i.e., $N \ll 1$, an upper bound on $P_{E2}^{\Omega}$ conditioned on $h_{CB,2}$, denoted as $P_{E2}^{\Omega ub}$, can be expressed as 
\textcolor{black}{
\begin{eqnarray}{rcl}
P_{E2}^{\Omega ub}&=&\frac{1}{2}[ 2 P_{11} (P_{B1}+P_{B2})+ P_{01} (1-P_{B1}) \nonumber\\ &+&  2 P_{00} P_{B3}+ P_{10} (1-P_{B3})], 
\end{eqnarray}}
where $P_{B1}=\Pr \left(\Delta^{(1)}_{(1, 1)\rightarrow (0, 0)}\right)$, $P_{B2}=\Pr \left(\Delta^{(1)}_{(1, 1)\rightarrow (1, 2)}\right)$ and $P_{B3}=\Pr \left(\Delta^{(1)}_{(0, 0)\rightarrow (1, 1)}\right)$.
\end{theorem}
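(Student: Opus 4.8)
The plan is to start from the exact conditional average in \eqref{TS2sum} and reduce it, using the rotational symmetry of $M$-PSK, to a weighted combination of the three pairwise error events $P_{B1}, P_{B2}, P_{B3}$. First I would observe that, conditioned on $h_{CB,2}$, the effective constellation seen at Bob in time-slot $2$ consists of two interleaved $M$-PSK rings: an inner ring of radius $|h_{CB,2}|$ (the ``crosses'', sent when $\bar{x}_1=1$) and an outer ring of radius $\sqrt{2-\alpha}\,|h_{CB,2}|$ rotated by $\theta=\pi/M$ (the ``diamonds'', sent when $\bar{x}_1=0$). Since this two-ring constellation together with the circularly-symmetric noise $w_{B,2}$ is invariant under a simultaneous rotation by $2\pi/M$ and the relabelling $b_2\mapsto b_2+1$, and since the crossover probabilities $P_{ij}$ do not depend on $b_2$, the conditional error probability $\Pr(\text{error}\mid a_1,b_2)$ is independent of $b_2$. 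Hence the double sum collapses to $P_{E2}^{\Omega}=\tfrac12\big[\Pr(\text{error}\mid a_1=1)+\Pr(\text{error}\mid a_1=0)\big]$, evaluated at a single representative symbol for each value of $a_1$.

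Next, for each $a_1$ I would condition on Charlie's decision $\bar{x}_1$, weighting by $P_{11},P_{10}$ (for $a_1=1$) and $P_{00},P_{01}$ (for $a_1=0$). When Charlie decodes correctly, the transmitted point lies on the ring consistent with $a_1$ and a correct decision by Bob means recovering that exact point; the error probability is then a high-SNR nearest-neighbour union bound over the adjacent constellation points. For a transmitted cross the two adjacent diamonds contribute $2P_{B1}$ and the two adjacent crosses contribute $2P_{B2}$, giving $2P_{11}(P_{B1}+P_{B2})$; for a transmitted diamond the two adjacent crosses contribute $2P_{B3}$, giving $2P_{00}P_{B3}$. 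The step justifying that only these neighbours survive for $N\ll 1$ is a comparison of squared Euclidean distances: the diamond-to-diamond spacing $2\sqrt{2-\alpha}\,|h_{CB,2}|\sin(\pi/M)$ strictly exceeds the cross-to-cross spacing $2|h_{CB,2}|\sin(\pi/M)$ because $\alpha\in(0,1)$, so diamond-to-diamond error events are exponentially negligible and are dropped. This is precisely why the diamond branch retains only $P_{B3}$ whereas the cross branch retains both $P_{B1}$ and $P_{B2}$.

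The delicate part is the error-propagation, i.e.\ the two branches in which Charlie decodes Alice's bit incorrectly. When $a_1=1$ but $\bar{x}_1=0$ (probability $P_{10}$), Charlie transmits a diamond while the jointly-correct output for Bob is the adjacent cross with the \emph{same} symbol index; thus Bob is in fact \emph{correct} exactly on the cancellation event in which he mistakes that transmitted diamond for its neighbouring cross, whose conditional probability is $P_{B3}$. Consequently the conditional error probability in this branch is $1-P_{B3}$, yielding $P_{10}(1-P_{B3})$, and the symmetric argument with the two rings interchanged gives $P_{01}(1-P_{B1})$ for the branch $a_1=0,\bar{x}_1=1$. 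I expect this bookkeeping---recognising that an FDR decoding error at Charlie converts an ordinary pairwise error at Bob into a \emph{cancellation} that restores the correct joint symbol---to be the main obstacle, since it is here that the relaying/multiple-access structure departs from a textbook union bound. Assembling the four branches, multiplying each by its Charlie-weight and by the prefactor $\tfrac12$, and grouping terms then reproduces the claimed expression for $P_{E2}^{\Omega ub}$, with the high-SNR assumption used throughout to retain only the minimum-distance neighbours in each union bound.
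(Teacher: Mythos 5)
Your proposal is correct and follows essentially the same route as the paper: you collapse the average in \eqref{TS2sum} using the rotational symmetry of the two-ring constellation of Fig.~\ref{Constellation_diagram_TS2}, condition on Charlie's decision with weights $P_{ij}$, apply a high-SNR nearest-neighbour union bound over the error events $\Delta^{(1)}$ in the correctly-decoded branches, and handle error propagation through the cancellation events of probability $P_{B1}$ and $P_{B3}$, which reproduces the stated expression exactly. The only cosmetic caveat is that $1-P_{B3}$ (resp.\ $1-P_{B1}$) equates Bob's probability of the compensating decision with the corresponding pairwise probability, which holds only asymptotically as $N \ll 1$ --- but this is precisely the convention the paper itself adopts.
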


In the following proposition, we present expressions for $P_{B1}$, $P_{B2}$ and $P_{B3}$, which can be derived from their definitions using first principles.

\begin{Proposition} \label{PB0PB1}
For a given $\alpha$, $N$ and $h_{CB,2}$, the expressions of $P_{B1}$, $P_{B2}$ and $P_{B3}$ are given by
\textcolor{black}{
\begin{eqnarray}{rcl}
\hspace{-4mm}P_{B1}=P_{B2}=Q\left(\frac{|h_{CB,2}|d}{\sqrt{2 N}}\right),
P_{B3}=Q\left(\frac{|h_{CB,2}|d}{\sqrt{2 N}}\right),
\end{eqnarray}}
where $d=\sqrt{3-\alpha-2\sqrt{2-\alpha} cos (\pi/M)}$.
\end{Proposition}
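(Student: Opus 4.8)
The plan is to reduce each of the pairwise error probabilities $P_{B1},P_{B2},P_{B3}$ to a single Gaussian tail by exploiting the two-ring geometry of the time-slot-2 constellation. First I would make the constellation explicit: since $\bar{x}_1=1$ leaves Charlie's symbol unmodified (energy $E_H=1$) while $\bar{x}_1=0$ rescales it to energy $2-\alpha$ and rotates it by $\theta=\pi/M$, the transmitted points $(\bar{x}_1,b_2)$ lie on two concentric PSK rings of radii $1$ and $\sqrt{2-\alpha}$, the outer one offset in angle by $\pi/M$. Because $h_{CB,2}$ is known at Bob and both conditional densities $f_{21}$ and $f_{20}$ are circularly-symmetric complex Gaussians with the \emph{same} variance $N_{0b}=N$ (unlike time-slot 1, where $N_{1b}\neq N_{0b}$ forced the Marcum-$Q$ terms of Proposition \ref{PA}), the joint decoder of \eqref{D21} is an equal-variance, equal-prior ML detector. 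For such a detector the pairwise error between two points separated by Euclidean distance $D$ is exactly $Q\!\left(D/\sqrt{2N}\right)$, so each $\Pr\!\left(\Delta^{(1)}_{(a_1,b_2)\rightarrow(\hat{a}_1,\hat{b}_2)}\right)$ collapses to a $Q$-function once the corresponding distance is known.

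Second, I would compute the three Euclidean distances via the law of cosines on the channel-scaled constellation, every point carrying the common factor $|h_{CB,2}|$. The events $P_{B1}$ and $P_{B2}$ are the two minimum-distance transitions from the inner point $(1,1)$ to its two angularly-adjacent outer neighbours, while $P_{B3}$ is the reverse transition from the outer point $(0,0)$ to its adjacent inner neighbour; in all three cases the two points sit on different rings (radii $1$ and $\sqrt{2-\alpha}$) separated in angle by exactly $\pi/M$. Hence each squared distance equals $|h_{CB,2}|^2\bigl(1+(2-\alpha)-2\sqrt{2-\alpha}\cos(\pi/M)\bigr)=|h_{CB,2}|^2 d^2$ with $d=\sqrt{3-\alpha-2\sqrt{2-\alpha}\cos(\pi/M)}$, and substituting $D=|h_{CB,2}|d$ into the pairwise formula yields $P_{B1}=P_{B2}=P_{B3}=Q\!\left(|h_{CB,2}|d/\sqrt{2N}\right)$.

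The step I expect to be the crux is the second one: verifying that the choice $\theta=\pi/M$ places each inner point exactly midway, in angle, between two outer points, so that the inner-to-outer nearest-neighbour distance is the \emph{same} $d$ for all three labelled transitions, and in particular that $P_{B1}$ and $P_{B2}$ — although written with different symbol indices — are geometrically identical. Once this uniform nearest-neighbour distance is established, the equal-variance structure makes the remaining algebra routine; the only care needed is to confirm that the per-real-dimension noise variance is $N/2$, which is precisely what turns the separation $D=|h_{CB,2}|d$ into the argument $D/\sqrt{2N}$ of the $Q$-function rather than a form with a mismatched $\sigma$.
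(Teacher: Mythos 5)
Your derivation is correct, and it supplies precisely the first-principles argument the paper leaves implicit (the paper states only that the expressions ``can be derived from their definitions using first principles'' and gives no written proof): the slot-2 constellation is two concentric $M$-PSK rings of radii $1$ and $\sqrt{2-\alpha}$, the outer ring rotated by $\theta=\pi/M$; conditioned on $\bar{x}_1$, the densities $f_{21}$ and $f_{20}$ are complex Gaussians with the common variance $N$ (hence $N/2$ per real dimension), so each pairwise error probability is exactly $Q\left(D/\sqrt{2N}\right)$, and the law of cosines gives $D=|h_{CB,2}|d$ with $d^2=1+(2-\alpha)-2\sqrt{2-\alpha}\cos(\pi/M)=3-\alpha-2\sqrt{2-\alpha}\cos(\pi/M)$ for any inter-ring pair at angular offset $\pi/M$. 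Your treatment of the mixture issue is also the intended one: although the metric in \eqref{D21} conditions on $x_1$, whose exact slot-2 density is a $P_{ij}$-weighted Gaussian mixture, the weighting by $P_{11},P_{01},P_{00},P_{10}$ in Theorem \ref{Pe2th} shows the pairwise events are evaluated between pure Gaussian hypotheses indexed by $\bar{x}_1$, exactly as you assume.

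The crux you flagged deserves to be made explicit, because it is the one place where a literal reading would break the statement: taken at face value, the label $(1,1)\rightarrow(1,2)$ defining $P_{B2}$ in Theorem \ref{Pe2th} is an \emph{intra}-ring transition (both hypotheses on the unit ring), whose pairwise probability is $Q\left(\sqrt{2}\,|h_{CB,2}|\sin(\pi/M)/\sqrt{N}\right)$, independent of $\alpha$, and this cannot equal $Q\left(|h_{CB,2}|d/\sqrt{2N}\right)$ in general. The proposition is consistent only with the reading you adopt, namely that $P_{B2}$ is the second angularly-adjacent \emph{cross}-ring neighbour of the inner point (the first index in $(1,2)$ is presumably a typo for an outer-ring label). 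Your reading is also the one consistent with the high-SNR dominant-neighbour union bound underlying Theorem \ref{Pe2th}: in the operating regime $\alpha\rightarrow 1$ one has $d\rightarrow 2\sin\left(\pi/(2M)\right) < 2\sin(\pi/M)$, so the two cross-ring neighbours at offset $\pm\pi/M$ are the nearest neighbours and dominate the error event, which is exactly why the bound keeps two such terms for the inner point and one for the outer.
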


Note that we have an upper bound on $P_{E1}^{\Omega}$ and $P_{E2}^{\Omega}$, as denoted by $P_{E1}^{\Omega ub}$ and $P_{E2}^{\Omega ub}$, in Theorem \ref{T1} and Theorem \ref{Pe2th}, respectively. Given that $P_{E1}^{\Omega ub}$ and $P_{E2}^{\Omega ub}$ are functions of $h_{CB,1}$ and $h_{CB,2}$, respectively, the next task is to find expectation $\mathbb{E}_{|h_{CB,1}|^2,|h_{CB,2}|^2}[P_{E}^{\Omega sub}]$. Using the linearity property of the expectation operator, we have Theorem \ref{PEtotalDTRTFtheorem}, wherein, the upper bound is derived by using \cite{Marcum,q_approx}.
\begin{theorem} \label{PEtotalDTRTFtheorem}
\textcolor{black}{
    An upper bound on average probability of decoding error at Bob using SODTRTF decoder is given by}
\begin{eqnarray}{rcl}\label{PEtotalDTRTF}
  \hspace{-4mm} \textcolor{black}{ \mathbb{E}_{|h_{CB,1}|^2,|h_{CB,2}|^2}[P_{E}^{\Omega sub}] \leq
    P_{UEavg}^{\Omega sub}\triangleq P_{avg1}^{\Omega ube}+P_{avg2}^{\Omega ube}, }
\end{eqnarray}
\textcolor{black}{where} 
\begin{eqnarray}{rcl}
\textcolor{black}{P_{avg1}^{\Omega ube}=P_{A1}^{avg}+P_{A2}^{avg}+P_{A3}^{avg}+P_{A4}^{avg},}
\end{eqnarray}
\begin{eqnarray}{rcl}
\textcolor{black}{P_{avg2}^{\Omega ube}}&\textcolor{black}{=}&\textcolor{black}{\frac{1}{2}[ 2 P_{11} (P_{B}^{avg}+P_{B2}^{avg})+ P_{01} (1-P_{B}^{avg})}\nonumber  \\ 
 &\textcolor{black}{+}& \textcolor{black}{ 2 P_{00} P_{B}^{avg}+ P_{10} (1-P_{B}^{avg})] }.
\end{eqnarray}
The expression of $P_{A1}^{avg}$,  $P_{A2}^{avg}$, $P_{A3}^{avg}$, $P_{A4}^{avg}$, $P_{B}^{avg}$ and $P_{B2}^{avg}$ are given in \eqref{TH3exp1} and \eqref{TH3exp2}.
Also, $k_{1} = 0.168$, $k_{2} = 0.144$, $k_{3} = 0.002$, $t_{1} = 0.876$, $t_{2} = 0.525$, and $t_{3} = 0.603$.
Note that the expressions of $P_{B1}$ and $P_{B3}$ given in Proposition \ref{PB0PB1} are identical, thus averaging them over $|h_{CB,2}|$ yields the same result, denoted as $P_{B}^{avg}$.
\end{theorem}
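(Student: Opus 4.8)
The plan is to combine the two conditional upper bounds already established in Theorem \ref{T1} and Theorem \ref{Pe2th} with the union bound \eqref{PesubD}, and then average over the fading coefficients. Starting from $P_{E}^{\Omega sub} \leq P_{E1}^{\Omega} + P_{E2}^{\Omega}$ and substituting $P_{E1}^{\Omega} \leq P_{E1}^{\Omega ub}$ and $P_{E2}^{\Omega} \leq P_{E2}^{\Omega ub}$, I would invoke monotonicity and linearity of the expectation operator. Since $P_{E1}^{\Omega ub}$ is a function of $h_{CB,1}$ alone and $P_{E2}^{\Omega ub}$ of $h_{CB,2}$ alone, linearity alone splits the joint average into two single-variable averages,
\begin{equation}
\mathbb{E}_{|h_{CB,1}|^2,|h_{CB,2}|^2}[P_{E}^{\Omega sub}] \leq \mathbb{E}_{|h_{CB,1}|^2}[P_{E1}^{\Omega ub}] + \mathbb{E}_{|h_{CB,2}|^2}[P_{E2}^{\Omega ub}].
\end{equation}
Because $\partial = 0$ throughout this section, each $h_{CB,k}\sim\mathcal{CN}(0,1)$, so $|h_{CB,k}|^2$ is exponentially distributed with unit mean, i.e. has density $e^{-x}$ on $x\geq 0$. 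The remaining work is to evaluate the two marginal averages term by term, identifying the first with $P_{avg1}^{\Omega ube}$ and the second with $P_{avg2}^{\Omega ube}$.

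For the first average I would substitute the four-term expression of Theorem \ref{T1} and push the expectation through the finite sum, obtaining $\sum_{\ell=1}^{4}\mathbb{E}_{|h_{CB,1}|^2}[P_{A\ell}]$. The terms $P_{A2}$ and $P_{A3}$ of Proposition \ref{PA} are plain Gaussian $Q$-functions of $|h_{CB,1}|$; I would average them using the exponential approximation $Q(x)\approx\sum_{i=1}^{3}k_i e^{-t_i x^2}$ of \cite{q_approx} together with the elementary moment-generating identity $\int_0^\infty e^{-cx}e^{-x}\,dx=(1+c)^{-1}$, which immediately reproduces the forms $P_{A2}^{avg}$ and $P_{A3}^{avg}$ in \eqref{TH3exp1}. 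For the second marginal average, the coefficients $P_{11},P_{01},P_{00},P_{10}$ are Charlie's bit-error probabilities and are constants with respect to $h_{CB,2}$, so the expectation passes unchanged through the linear combination of Theorem \ref{Pe2th}, replacing $P_{B1},P_{B2},P_{B3}$ by their averages. Since Proposition \ref{PB0PB1} shows that $P_{B1}$ and $P_{B3}$ share the same $Q$-function expression, their averages coincide and collapse to the single quantity $P_{B}^{avg}$, while $P_{B2}$ averages to $P_{B2}^{avg}$; averaging each $Q$-function exactly as above reproduces \eqref{TH3exp2} and assembles $P_{avg2}^{\Omega ube}$.

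The main obstacle is the averaging of the two Marcum-$Q$ terms $P_{A1}$ and $P_{A4}$ in Proposition \ref{PA}. Unlike the $Q$-function terms, here both arguments $\sqrt{N_{1b}r}$ and $\sqrt{Ns}$ depend on $|h_{CB,1}|^2$ simultaneously, through $r=\tfrac{2\sqrt{\alpha}|h_{CB,1}|^2}{(1-\alpha)^2}$ and through the logarithmic threshold buried inside $s$, so the average is not a simple Laplace transform of the fading density and the two arguments cannot be decoupled. I would handle this by invoking the closed-form integral identity for the expectation of a first-order Marcum-$Q$ function against an exponential weight from \cite{Marcum}; careful bookkeeping of the coupling between $r$ and $s$ is what collapses the integral to the clean rational form $P_{A1}^{avg}=\big(\tfrac{N_{1b}\sqrt{\alpha}}{(1-\alpha)^2}+1\big)^{-1}$ and, via the scaling relation between the two Marcum arguments, yields $P_{A4}^{avg}=(N/N_{1b})^{N_{1b}/(1-\alpha)}P_{A1}^{avg}$. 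Collecting $P_{avg1}^{\Omega ube}$ and $P_{avg2}^{\Omega ube}$ then delivers the claimed bound $P_{UEavg}^{\Omega sub}$.
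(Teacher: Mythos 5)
Your proposal follows essentially the same route as the paper, whose own justification of Theorem \ref{PEtotalDTRTFtheorem} is exactly this: combine the conditional bounds of Theorems \ref{T1} and \ref{Pe2th} through \eqref{PesubD}, split the joint average by linearity of expectation (since $P_{E1}^{\Omega ub}$ depends only on $h_{CB,1}$ and $P_{E2}^{\Omega ub}$ only on $h_{CB,2}$, both with unit-mean exponential $|h|^2$), and evaluate the marginal averages term by term using the composite exponential approximation $Q(x)\approx\sum_{i=1}^{3}k_i e^{-t_i x^2}$ from \cite{q_approx} and the Marcum-$Q$ results of \cite{Marcum}. One minor caveat: \cite{Marcum} provides bounds on $Q_1(\cdot,\cdot)$ rather than a ready-made exponential-weight integral identity, so your description of the $P_{A1}^{avg}$ and $P_{A4}^{avg}$ step is slightly loose, but since the paper itself offers no more detail than the citation, your treatment is at the same level of rigor as the original.
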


To validate the accuracy of our derived expressions, in Fig. \ref{DTRTF_SNR}, we use monte-\textcolor{black}{carlo} simulations to plot the average probability of decoding error 
using the JMAP decoder (defined in \eqref{JMAP}, and the corresponding probability of error denoted as $P_{Eavg}^{\Omega}$) and SODTRTF decoder (defined in \eqref{Dtimeslot1} and \eqref{D21}, and the corresponding probability of error denoted as $P_{E}^{\Omega sub}$) for $N_C=2$.
Also, we plot the derived mathematical upper bound on $P_{E}^{\Omega sub}$, denoted as $P_{UEavg}^{\Omega sub}$ (given in Theorem \ref{PEtotalDTRTFtheorem}), denoted by $P_{UEavg}^{\Omega sub}$.
From the plots, it is clear that the curves of $P_{Eavg}^{\Omega}$ and $P_{E}^{\Omega sub}$ nearly overlap, demonstrating the near-optimality of  our proposed SODTRTF decoder compared to the optimal decoder.
Furthermore, the curve of  $P_{UEavg}^{\Omega sub}$ sits on the top of other two curves, which shows that $P_{UEavg}^{\Omega sub}$ is truly an upper bound. All three plots exhibit the same trend, with their global minima occurring at similar values of $\alpha$. 

\begin{figure}[ht!]
\vspace{-0.2cm}
\includegraphics[scale=0.23]{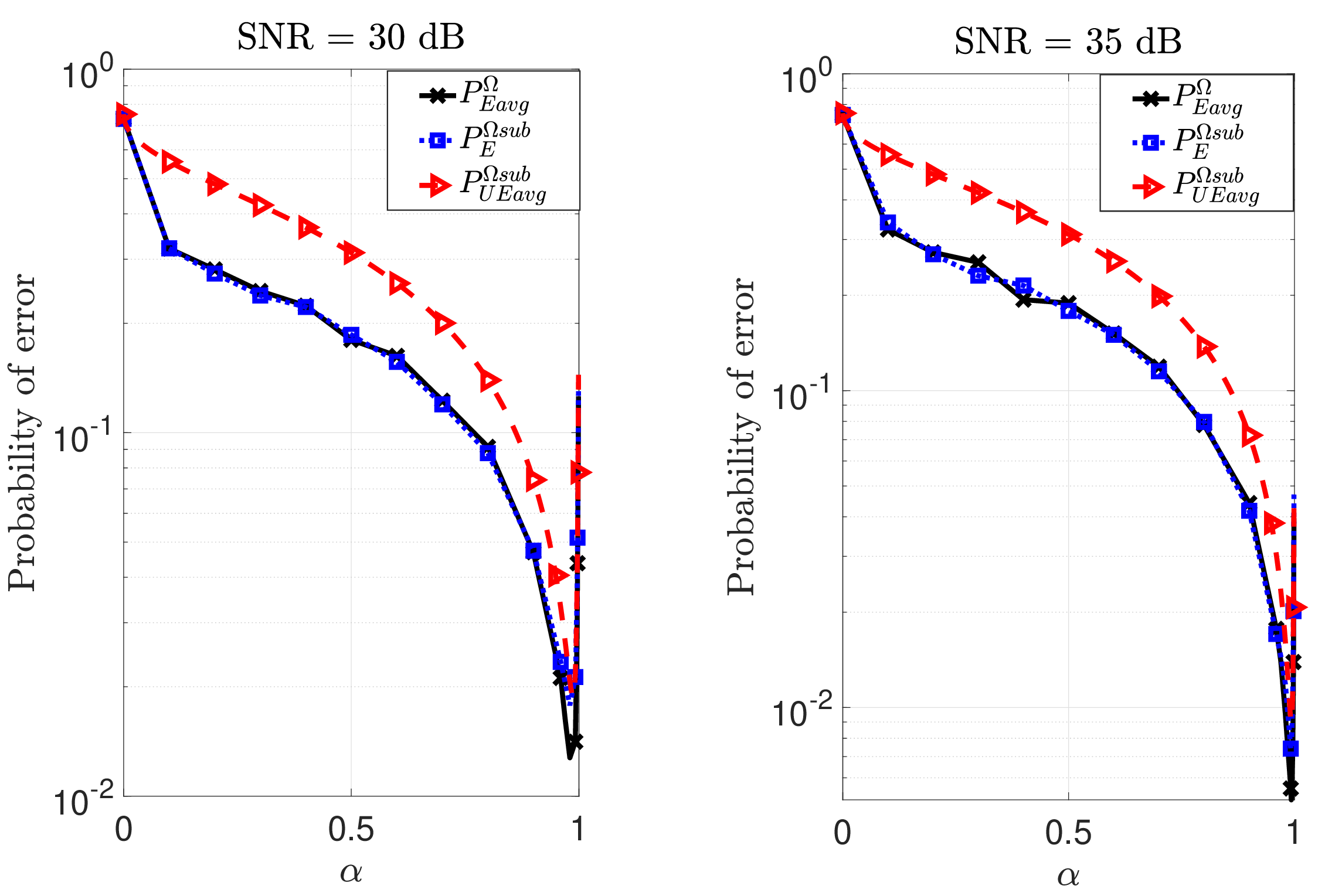}
\vspace{-0.6cm}
\caption{\textcolor{black}{Figure illustrating the error performance  of the optimal decoder ($P_{Eavg}^{\Omega}$) and the SODTRTF decoder ($P_{E}^{\Omega sub}$), as a function of the energy-splitting factor $\alpha$. Additionally, we plot the derived upper bound on the average probability of decoding error ($P_{UEavg}^{\Omega sub}$) of the SODTRTF decoder.}} 
\label{DTRTF_SNR}
\end{figure}

\textcolor{black}{Although we have proposed the DTRTF scheme for $m \geq n_{fr}$ and $m \neq 0$, we highlight that it can also be used for $m < n_{fr}$
and $m \neq 0$. However, in such a case, the residual interference would be higher than $\rho_{th}$, which in-turn results in increased values of $P_{10}$, $P_{01}$, thereby degrading the overall error probability at Bob. Despite the benefits of the DTRTF scheme in the regime $m  \geq n_{fr}$, it is evidently not applicable when $m=0$. Therefore, we present the LLCRTF scheme, which is applicable for $m<n_{fr}$ in general, and particularly relevant when $m=0$.}

\begin{figure}[ht!]
\begin{center}
\includegraphics[scale=0.35]{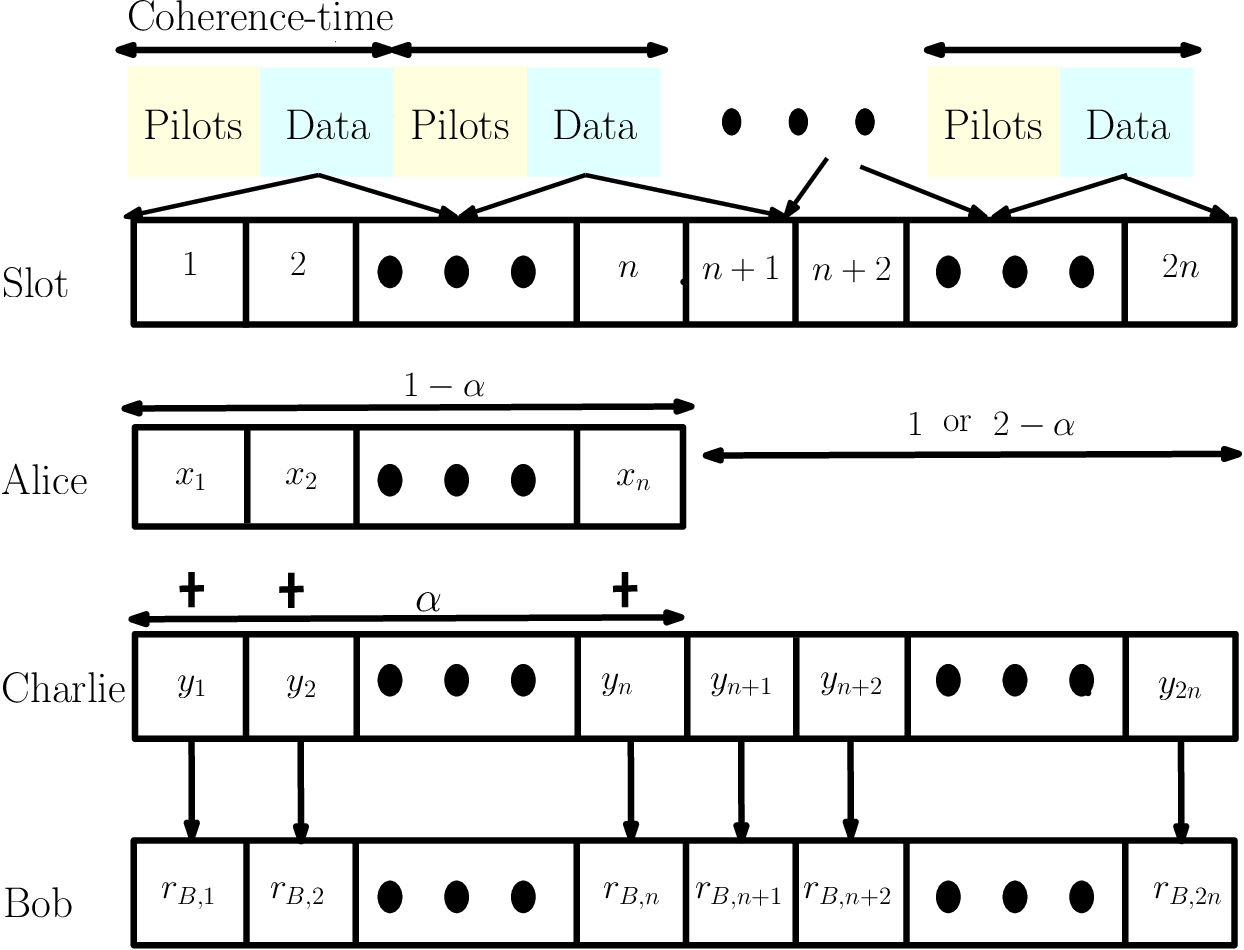}
\end{center}
\caption{\textcolor{black}{Schematic representation of the transmission of information symbols over $f_{CB}$ band in the LLCRTF scheme. 
}}
\label{nlatencytimeslots}
\end{figure}

\section{LLCRTF Mitigation Strategy} \label{LLCA}
This section presents the LLCRTF scheme, which is applicable when $m<n_{fr}$.
As shown in Fig. \ref{nlatencytimeslots},  LLCRTF scheme has the same structure as that of  DTRTF scheme. \textcolor{black}{However, in this scheme, Charlie does not embed Alice's bits into his transmitted symbols in the $(n+k)^{th}$ time-slot, owing to the fact that the latency-constraint is less than the delay offered by Charlie's FDR.} As a result, during the $(n+k)^{th}$ time-slot, Charlie selects one of the two sets of $M$-PSK symbols in a random fashion with uniform probability. Note that this modification is introduced to maintain the average energy over $f_{CB}$ band.
This randomness in the choice of the constellation is captured using a binary random variable $x_k^d \in \{0, 1\}$, such that if $x_k^d=1$, Charlie transmits his regular $M$-PSK symbol with unit energy.
On the other hand, if $x_k^d=0$, Charlie transmits his $M$-PSK symbol with $(2-\alpha)$ energy along with an additional phase-shift of $\pi/M$.
Since Charlie does not incorporate Alice's bits into his transmitted symbols, the value of $n \geq 1 $ can be chosen in a predetermined fashion by Alice, Charlie and Bob.\footnote{\textcolor{black}{Since LLCRTF does not require FDRs, this countermeasure is an appealing choice for low-powered internet-of-things networks.}} Except for the above mentioned changes on the $f_{CB}$ band, the LLCRTF and the DTRTF schemes have the same signalling method on the $f_{AB}$ band. Therefore, the salient features of  DTRTF scheme given in Proposition \ref{SF} are also applicable for LLCRTF scheme.

Compared to the DTRTF scheme, this scheme offers reduced complexity at Charlie as it eliminates the need for using an FDR to decode Alice's bits.\footnote{\textcolor{black}{As the LLCRTF scheme does not require a helper with an FDR, to implement this scheme, any of the three UEs, i.e. Tom, Frank and Charlie can be selected as the helper. However, to have uniformity with the DTRTF scheme, we will continue to discuss all the mathematical analyses for the case when Charlie acts as the helper without using his FDR capability. } } 
Since Charlie does not embed Alice's bits into his symbols, Bob needs to decode Alice's and Charlie's symbols in the $k^{th}$ time-slot, whereas decode only Charlie's symbols in the $(n+k)^{th}$ time-slot.
Similar to the DTRTF scheme, the average probability of error will be a function of $\alpha$, therefore 
 we present Problem \ref{main_opt_problem_n_time_slots_latency} that determines the optimal value of $\alpha$, denoted by $\alpha ^{\mu}_{opt}$, which jointly minimizes the average probability of decoding error at Bob  associated with decoding $x_k$, $y_k$, and $y_{n+k}$, (represented using $P_{Eavg}^{\mu}$) and the average probability of detection at Dave when using the KLD-estimator based detector and
the instantaneous energy detector (represented using $P_{Davg}^{\mu}$). 

\begin{mdframed}
\begin{problem}
\label{main_opt_problem_n_time_slots_latency}
For a given noise variance N, solve $\alpha ^{\mu}_{opt}= \arg \mathop {\min }\limits_{\alpha \in (0, 1)} P_{Eavg}^{\mu} +P_{Davg}^{\mu}.$
\end{problem}
\end{mdframed} 

We highlight that using $r_{B,k}$, Bob jointly decodes Alice's and Charlie's symbols transmitted during $k^{th}$ time-slot, and  using $r_{B,2}$, Bob decodes Charlie's symbols transmitted during $(n+k)^{th}$ time-slot. We follow the footsteps of Theorem \ref{PEtotalDTRTFtheorem}, and present an upper bound on $P_{Eavg}^{\mu}$ in Theorem \ref{PEtotalLLCRTFtheorem}.

\begin{theorem} \label{PEtotalLLCRTFtheorem}
     An upper bound on average probability of decoding error at Bob for the LCLRTF scheme is given by
\begin{eqnarray}{rcl}\label{PEtotalLLCRTF}
  \hspace{-4mm}  P_{Eavg}^{\mu} \leq
P_{UEavg}^{\mu}\triangleq P_{avg1}^{\mu ube}+P_{avg2}^{\mu ube}, 
\end{eqnarray}
where 
\textcolor{black}{
 \begin{eqnarray}{rcl}
\hspace{-6mm}P_{avg1}^{\mu ube}= \frac{2(P_{A1}^{avg}+P_{A2}^{avg}+P_{A3}^{avg}+P_{A4}^{avg})+P_{A5}+P_{A6}}{2},
\end{eqnarray}
 \begin{eqnarray}{rcl}
P_{avg2}^{\mu ube}=P_{B}^{avg}+P_{B2}^{avg}.
\end{eqnarray}}
\noindent Here,
\textcolor{black}{
\begin{eqnarray}{rcl}
 P_{A5}& \triangleq & \Pr[(1,1)\rightarrow (0,1)] =1-\left(\frac{N_{1b}}{N_{0b}}^{\frac{N_{0b}}{\alpha-1}}\right),\\
P_{A6} &\triangleq &\Pr[(0,1)\rightarrow (1,1)]=\left(\frac{N_{0b}}{N_{1b}}^{\frac{N_{1b}}{1-\alpha}}\right).
 \end{eqnarray}}
Also, the expressions of $P_{A1}^{avg}$, $P_{A2}^{avg}$, $P_{A3}^{avg}$, $P_{A4}^{avg}$, $P_{B}$ and $P_{B}$ are given in Theorem \ref{PEtotalDTRTFtheorem}. 
\end{theorem}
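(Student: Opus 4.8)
The plan is to establish Theorem \ref{PEtotalLLCRTFtheorem} by mirroring the decomposition already carried out for the DTRTF scheme in Theorem \ref{PEtotalDTRTFtheorem}, while carefully accounting for the structural difference that Charlie no longer embeds Alice's bits into his $(n+k)^{th}$ symbol. First I would separate the overall error into the two time-slots: in the $k^{th}$ time-slot Bob jointly decodes Alice's bit $x_k$ and Charlie's symbol $y_k$ from $r_{B,k}$, and in the $(n+k)^{th}$ time-slot Bob decodes only Charlie's symbol $y_{n+k}$ from $r_{B,2}$. Writing $P_{Eavg}^{\mu} \leq P_{avg1}^{\mu ube} + P_{avg2}^{\mu ube}$ via a union bound over the two slots reduces the problem to bounding each contribution separately, exactly as in \eqref{PesubD}.

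For the first term $P_{avg1}^{\mu ube}$, I would reuse the time-slot-1 constellation analysis (Fig.~\ref{Constellation_diagram_TS1}) and the high-SNR nearest-neighbour bound from Theorem \ref{T1}, since the $k^{th}$-slot signalling is identical to the DTRTF case. This supplies the four dominant pairwise error terms whose averaged values $P_{A1}^{avg},\ldots,P_{A4}^{avg}$ are already given in \eqref{TH3exp1}. The new ingredient is that, because Alice's bit is \emph{only} carried in the $k^{th}$ slot under LLCRTF (rather than being reinforced in the $(n+k)^{th}$ slot), additional error events of the type $(1,1)\rightarrow(0,1)$ and $(0,1)\rightarrow(1,1)$ — where Charlie's symbol is decoded correctly but Alice's bit flips — become non-negligible and must be added. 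These give the extra terms $P_{A5}$ and $P_{A6}$; I would derive their closed forms by integrating the conditional OOK error over the Gaussian densities with variances $N_{1b}=N+1-\alpha$ and $N_{0b}=N$, yielding the stated expressions involving $(N_{1b}/N_{0b})^{N_{0b}/(\alpha-1)}$ and $(N_{0b}/N_{1b})^{N_{1b}/(1-\alpha)}$. Combining everything with the appropriate $\tfrac{1}{2}$ factors accounting for the two values of $x_k^d$ produces the claimed $P_{avg1}^{\mu ube}$.

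For the second term $P_{avg2}^{\mu ube}$, I would observe that in the $(n+k)^{th}$ slot Charlie transmits a pure $M$-PSK symbol from one of two energy levels chosen by $x_k^d$, with no dependence on the error probabilities $P_{ij}$ of Alice's bit. Hence the decoding of $y_{n+k}$ is a standard two-hypothesis $M$-PSK detection averaged over the Rayleigh-faded channel $|h_{CB,2}|$, and the relevant pairwise errors reduce to the same $Q$-function quantities whose averages are $P_{B}^{avg}$ and $P_{B2}^{avg}$ from \eqref{TH3exp2}. This directly gives $P_{avg2}^{\mu ube}=P_{B}^{avg}+P_{B2}^{avg}$, notably simpler than its DTRTF counterpart since none of the $P_{00},P_{01},P_{10},P_{11}$ terms survive.

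The main obstacle I anticipate is the correct bookkeeping of the error events in the first slot: the nearest-neighbour union bound from Theorem \ref{T1} captures the joint $(x_k,y_k)$ errors, but I must verify that under LLCRTF the bit-only flip events $P_{A5},P_{A6}$ are genuinely additional rather than already subsumed, and that no double-counting occurs when I assemble them with the $2(P_{A1}^{avg}+\cdots+P_{A4}^{avg})$ block and the averaging over $x_k^d$. Getting the combinatorial weights and the $\tfrac{1}{2}$ normalisations consistent — so that the final bound is valid and tight at high SNR — is where the care is required; the remaining integrals are routine applications of the Marcum-$Q$ and $Q$-function approximations from \cite{Marcum,q_approx} already invoked in Theorem \ref{PEtotalDTRTFtheorem}.
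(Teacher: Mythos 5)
Your proposal is correct and follows essentially the same route the paper takes (the paper offers no separate proof of Theorem~\ref{PEtotalLLCRTFtheorem}, deriving it by ``following the footsteps'' of Theorem~\ref{PEtotalDTRTFtheorem}): a two-slot union bound, reuse of the slot-$k$ pairwise terms $P_{A1}^{avg}$--$P_{A4}^{avg}$ augmented by the bit-only flip events $P_{A5}$ and $P_{A6}$ --- which are genuinely additional under LLCRTF since Alice's bit is carried only in the $k^{th}$ slot, whereas the SODTRTF decoder discards $\hat{a}_1$ in slot 1 --- and a slot-$(n+k)$ bound $P_{B}^{avg}+P_{B2}^{avg}$ free of the $P_{ij}$ terms because Charlie no longer embeds Alice's decoded bit. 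One minor bookkeeping slip: the $\tfrac{1}{2}$ weighting on $P_{A5}+P_{A6}$ arises from averaging over Alice's equiprobable bit $a_1\in\{0,1\}$, not from the two values of $x_k^d$, since $x_k^d$ governs only the $(n+k)^{th}$ time-slot and plays no role in the first-slot error events.
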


Now, to prove the validity of our results, in Fig. \ref{LLCRTF_SNR}, we plot the average probability of decoding error (denoted as $P_{Eavg}^{\mu}$), and the derived mathematical upper bound on $P_{Eavg}^{\mu}$, denoted as $P_{UEavg}^{\mu}$ (given in Theorem \ref{PEtotalLLCRTFtheorem}).  
From the plots, we observe that both curves exhibit similar trend, and their minima are around the same value of $\alpha$.

\begin{figure}[ht!]
\vspace{-0.2cm}
\includegraphics[scale=0.23]{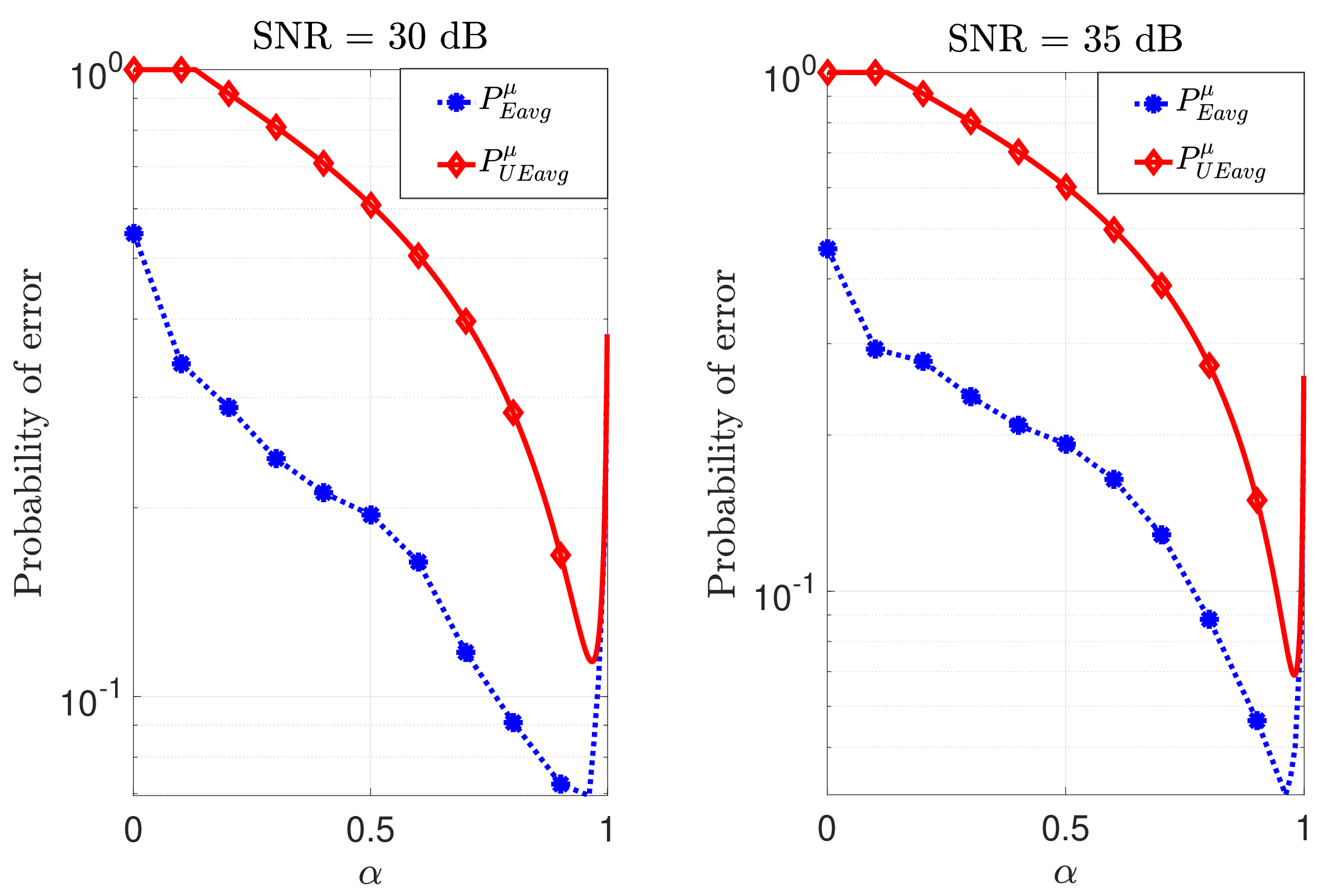}
\caption{\textcolor{black}{Figure illustrates the overall error performance of the LLCRTF scheme, denoted as $P_{Eavg}^{\mu}$, and the derived upper bound on $P_{Eavg}^{\mu}$, denoted as $P_{UEavg}^{\mu}$.}} 
\label{LLCRTF_SNR}
\end{figure}

In the following section, we will discuss the accuracy with which our proposed schemes gets detected by Dave.

\section{\textcolor{black}{Covertness} Analysis at Dave} \label{CD}
In this work, a countermeasure is said to be covert against a particular detector at Dave if it fails to detect the countermeasure with a high probability. 
Based on the threat model in Section \ref{SM}, we need to analyse the \textcolor{black}{covertness} of our countermeasures against the KLD-estimator based detector and the instantaneous energy detector on every band. Henceforth, we will discuss \textcolor{black}{covertness} analysis only on the $f_{AB}$ and the $f_{CB}$ bands since the other frequency bands are not altered implicitly. With respect to the $f_{AB}$ band, while the first result of Proposition \ref{SF} shows the effectiveness of the proposed schemes against the KLD-estimator based detector, the instantaneous energy detector is ineffective due to the non-coherent channel between Alice and Bob. Thus, we focus on the \textcolor{black}{covertness} analysis of our countermeasures when Dave uses the two detectors on the $f_{CB}$ band. However, towards solving the optimization problems posed in Problem \ref{main_opt_problem_n_time_slots} and Problem \ref{main_opt_problem_n_time_slots_latency}, we believe that deriving the corresponding expressions for $P_{Davg}^{\Omega}$ and $P_{Davg}^{\mu}$ are challenging due to the intractability in analytically characterising the probability of detection of the KLD-estimator based detector. As a consequence, in the next section, we derive the expressions for average probability of detection when using the instantaneous energy detector, and then use it in place of $P_{Davg}^{\Omega}$ and $P_{Davg}^{\mu}$ for optimizing the value of $\alpha$. Subsequently, in Section \ref{sec:kld_fcb}, we apply the solutions of these optimization problems to show that the estimates from the KLD estimator are closer to those when computing without the countermeasure.

\subsection{Instantaneous Energy Detector on $f_{CB}$ Band}

Recall that Dave has complete knowledge of the modulation schemes used by different users of the network. As a result, Dave uses the knowledge of the $M$-PSK constellation at Charlie, to monitor the envelope of the symbols on the $f_{CB}$ band using an instantaneous energy detector. Furthermore, to realize this detector, we consider a worst-case scenario for Alice wherein Dave estimates $|h_{CD,k}|$, i.e., instantaneous magnitude of the channel between himself and Charlie, in order to monitor the envelope. Given that unit energy $M$-PSK symbols are transmitted before implementing our proposed countermeasures, Dave raises a detection event on $f_{CB}$ band when the received instantaneous energy, after normalized by $|h_{CD,k}|$, is either below $1-\delta$ or above $1+\delta$, for some chosen $\delta >0$ 
\cite[Proposition 3]{ISIT}. In this context, $\delta$ is chosen such that the probability of false alarm is bounded by a small number of Dave's choice.

When the above detector is employed against the proposed DTRTF and LLCRTF schemes, there is a non-zero probability of detection owing to the fact that $M$-PSK symbols transmitted by Charlie are scaled by either $\sqrt{\alpha}$ or $\sqrt{2-\alpha}$, for some $0< \alpha < 1$. Therefore, to characterize the behaviour of their average probability of detection as a function of $\alpha$, we present the following proposition.

\begin{Proposition} \label{PD}
For the instantaneous energy detector on the $f_{CB}$ band, 
(1) the average probability of detection for the DTRTF scheme is same as that of the RHS in \cite[Theorem 2]{ISIT}.
(2) the average probability of detection for the LLCRTF scheme can be derived from that of the DTRTF, as the former scheme is a special case of the latter.
\end{Proposition}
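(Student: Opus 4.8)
The plan is to reduce the average probability of detection of the instantaneous energy detector to a functional that depends \emph{only} on the marginal distribution of the per-slot, channel-normalized transmit energy that Charlie's symbols induce on the $f_{CB}$ band, and then to show that this distribution is identical for the DTRTF scheme and for the RHS of \cite{ISIT}. Since Dave normalizes the received energy by $|h_{CD,k}|$ and compares it against the fixed thresholds $1\pm\delta$, the detector is oblivious to \emph{why} a given energy level occurs (information symbol versus dummy versus decoded bit); it reacts only to the energy level and to the realizations of the fading and noise. Accordingly, my first step is to write $P_{Davg}^{\Omega}$ as an expectation, over the fading $|h_{CD,k}|$ and the receiver noise, of the event that the normalized energy leaves $[1-\delta,1+\delta]$ conditioned on each energy configuration, averaged against the configuration probabilities.

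Next I would enumerate the energy configurations produced by the DTRTF scheme over a $2n$-slot frame. In the first $n$ slots Charlie's symbol carries the scaling $\sqrt{\alpha E_H}$ with Alice's OOK symbol $\sqrt{(1-\alpha)E_H}\,x_k$ superimposed, so the observed energy depends on whether $x_k=0$ or $x_k=1$; in the remaining $n$ slots Charlie transmits alone with scaling in $\{\sqrt{E_H},\,\sqrt{(2-\alpha)E_H}\}$ according to the decoded bit $\bar x_k$, which is equiprobable under equiprobable OOK input and a symmetric decision at Charlie. I would then match this list term-by-term against the helper-band energy profile of the RHS, verifying that both schemes employ the identical nontrivial scaling factors $\sqrt{\alpha}$ and $\sqrt{2-\alpha}$ with the same occurrence probabilities and the same superimposed on/off contribution, so that the two induced energy distributions coincide. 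With the detector, its threshold, and the induced energy distribution all identical, the detection-probability functional evaluates to the same number, which is precisely the quantity of \cite[Theorem 2]{ISIT}; this establishes part (1).

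For part (2) I would observe that the $f_{CB}$ signalling of the LLCRTF scheme differs from that of the DTRTF scheme only in that the energy and phase modification in the last $n$ slots is driven by the design bit $x_k^d$, drawn uniformly and independently, rather than by the decoded bit $\bar x_k$. Because the detector depends solely on the marginal energy distribution, and because $x_k^d$ reproduces the same two energy levels $\{1,2-\alpha\}$ each with probability $\tfrac12$, the LLCRTF energy distribution on $f_{CB}$ is a special case of the DTRTF one, namely the idealized case in which $\Pr(\bar x_k)=\tfrac12$ holds exactly. Specializing the computation from part (1) then yields the LLCRTF detection probability as claimed.

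The step I expect to be the main obstacle is the term-by-term matching over the first $n$ slots: there Alice's OOK is superimposed on Charlie's scaled PSK, so the normalized-energy statistic is a mixture governed by two fading channels, $h_{AD,k}$ and $h_{CD,k}$, together with the on/off input $x_k$. Showing that this mixture is distributionally identical to the corresponding superposition in the RHS — matching channel variances, scaling factors, and the on/off probabilities — is the delicate part; once that correspondence is secured, the averaging over $|h_{CD,k}|$ and the resulting closed form are inherited directly from \cite[Theorem 2]{ISIT}.
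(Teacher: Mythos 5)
Your proposal is correct and takes essentially the same route as the paper's proof: both arguments reduce covertness against the instantaneous energy detector to the distribution of energy levels induced on the $f_{CB}$ band, observe that the detector cannot distinguish a dummy $M$-PSK symbol from a legitimate one and is insensitive to the reordering of energy levels within a $2n$-slot block, and treat the LLCRTF's independent uniform bit $x_k^d$ as reproducing the two second-phase energy levels $\{1,\,2-\alpha\}$ of the DTRTF scheme. One minor remark: your symmetry assumption ensuring $\Pr(\bar{x}_k=1)=\tfrac{1}{2}$ is not actually needed for part (1), since both the DTRTF scheme and the RHS drive the second-slot scaling by the same decoded-bit mechanism with identical statistics $P_{01},P_{10}$, so the induced marginals coincide regardless of decoder symmetry.
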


\begin{proof}
    First, we will provide a sketch of the proof for the first part of the proposition. Note that the dummy $M$-PSK symbol in the RHS is replaced by a legitimate $M$-PSK symbol in the DTRTF scheme. Furthermore, owing to the $n$-symbol delay introduced by Charlie in the DTRTF scheme, the instantaneous energy distribution on two consecutive time-slots of the RHS over the $f_{CB}$ band is identical to that on the $k^{th}$ and $(n+k)^{th}$ time-slot over the $f_{CB}$ band in the DTRTF scheme, for $1 < k < n$. Although there is a change in the ordering of instantaneous energy levels between the two schemes, their covertness analysis are identical for a given block of $2n$ symbols. Thus, the average probability of detection for the DTRTF scheme is same as that of the RHS in \cite[Theorem 2]{ISIT}.

Within a block of $2n$ time-slots, the instantaneous energy distribution on the first $n$ time-slots of LLCRTF scheme is identical to that of DTRTF scheme. However for the $(n+k)^{th}$ time-slot, in LLCRTF scheme, Charlie chooses to scale his $M$-PSK symbol either with $\sqrt{2-\alpha}$ or unity with equal probability in an independent manner. This is in contrast to  DTRTF scheme, wherein the scale factor is chosen depending on the Alice's decoded bit that was transmitted in $k$-th time-slot. Thus, the instantaneous energy distribution on the second $n$ time-slots of the LLCRTF scheme is a deterministic variant of that on the second $n$ time-slots of the DTRTF scheme. As a result, the average probability of detection for the LLCRTF scheme can be derived from that of the DTRTF scheme. 
\end{proof}

From Proposition \ref{PD}, it is clear that the average probability of detection for the instantaneous energy detector is a function of the energy-splitting factor $\alpha$. However, we also know that the choice of $\alpha$ also dictates the average probability of decoding error at Bob. Therefore, as formulated in Problem \ref{main_opt_problem_n_time_slots}, the optimal energy-splitting factor for the DTRTF scheme is the one that minimises the sum of average probability of decoding error at Bob and average probability of detection at Dave. Given that we have upper bounds on the above terms, we propose to solve Problem \ref{main_opt_problem_n_time_slots_minima}, instead of Problem \ref{main_opt_problem_n_time_slots}, where $P_{UEavg}^{\Omega sub}$ is in \eqref{PEtotalDTRTF} and $P_{UD}^{avg}$ denotes the upper bound on the average probability of detection when using the instantaneous energy detector.

\begin{mdframed}
\begin{problem}
\label{main_opt_problem_n_time_slots_minima}
For a given noise variance N, we solve: $\alpha_{opt}^{\Omega min}= \arg \mathop {\min }\limits_{\alpha \in (0, 1)} P_{UEavg}^{\Omega sub} + P_{UD}^{avg}.$
\end{problem}
\end{mdframed}

Given the analytical complexities inherent in the derived expressions, assessing the behavior of $P_{UEavg}^{\Omega sub}$ and $P_{UD}^{avg}$ with respect to $\alpha$ is challenging and non-trivial. In light of this, we turn to simulation results, showcased in the left figure of Fig. \ref{min_int_DTRTF_LLCRTF}, which reveal that the minima of the sum of $P_{UEavg}^{\Omega sub}$ and $P_{UD}^{avg}$ closely aligns with the intersection point of $P_{avg \uparrow}^{\Omega}$ and $P_{avg \downarrow}^{\Omega}$. Here, 
\textcolor{black}{
\begin{eqnarray}{rcl}
P_{avg \uparrow}^{\Omega}\triangleq P_{avg1}^{\Omega ube} \quad \text{and} \quad
P_{avg \downarrow}^{\Omega}\triangleq P_{avg2}^{\Omega ube}+P_{UD}^{avg}
\end{eqnarray}}
are the increasing and decreasing functions of $\alpha$, respectively, where $P_{avg1}^{\Omega ube}$ and $P_{avg2}^{\Omega ube}$ are given in Theorem \ref{PEtotalDTRTFtheorem} and $P_{UD}^{avg}$ is given in \cite[Theorem 2]{ISIT}. Hence, we solve Problem \ref{main_opt_problem_n_time_slots_interesect} instead of Problem \ref{main_opt_problem_n_time_slots_minima}.
\begin{mdframed}
\begin{problem}
\label{main_opt_problem_n_time_slots_interesect}
For a given noise variance N, we solve for $\alpha_{opt}^{\Omega in}$, such that at $\alpha=\alpha_{opt}^{\Omega in}$, $P_{avg \uparrow}^{\Omega}-P_{avg \downarrow}^{\Omega}=0.$
\end{problem}
\end{mdframed}

Along the similar lines of the DTRTF scheme, as given in Problem \ref{main_opt_problem_n_time_slots_latency}, the optimal energy-splitting factor for the LLCRTF scheme is the one that minimises the sum of average probability of decoding error at Bob and average probability of detection at Dave. Therefore, we need to solve Problem \ref{main_opt_problem_n_tslatency_minima}.

\begin{mdframed}
\begin{problem}
\label{main_opt_problem_n_tslatency_minima}
For a given noise variance N, we solve: $\alpha_{opt}^{\mu min}= \arg \mathop {\min }\limits_{\alpha \in (0, 1)}P_{UEavg}^{\mu} +P_{UD}^{\mu avg}.$
\end{problem}
\end{mdframed} 

Similar to the DTRTF scheme, we take the help of simulation results, as illustrated in the right figure of Fig. \ref{min_int_DTRTF_LLCRTF}, note that the point of intersection between $P_{avg \uparrow}^{\mu}$ and $P_{avg \downarrow}^{\mu}$ closely approximates the solution offered by Problem \ref{main_opt_problem_n_tslatency_minima}, where 
\textcolor{black}{
\begin{eqnarray}{rcl}
P_{avg \uparrow}^{\mu}\triangleq P_{A5}+P_{A6},
\end{eqnarray}
\begin{eqnarray}{rcl}
P_{avg \downarrow}^{\mu} \triangleq P_{A1}^{avg}+P_{A2}^{avg}+P_{A3}^{avg}+P_{A4}^{avg}+P_{UD}^{\mu avg}.
\end{eqnarray}}
Here, $P_{A1}^{avg}$, $P_{A2}^{avg}$, $P_{A3}^{avg}$ and $P_{A4}^{avg}$ are given in Theorem \ref{PEtotalDTRTFtheorem}, $P_{A5}$ and $P_{A6}$ are given in Theorem \ref{PEtotalLLCRTFtheorem}, and $P_{UD}^{\mu avg}$ can be obtained using Proposition \ref{PD}. Consequently, we address Problem \ref{main_opt_problem_n_latencyts_interesect} as opposed to Problem \ref{main_opt_problem_n_tslatency_minima}.
 \begin{mdframed}
\begin{problem}
\label{main_opt_problem_n_latencyts_interesect}
For a given noise variance N, we solve for $\alpha_{opt}^{\mu in}$, such that at $\alpha=\alpha_{opt}^{\mu in}$, $P_{avg \uparrow}^{\mu}-P_{avg \downarrow}^{\mu}=0.$
\end{problem}
\end{mdframed}

\begin{figure}
\vspace{-0.3cm}
\includegraphics[scale=0.23]{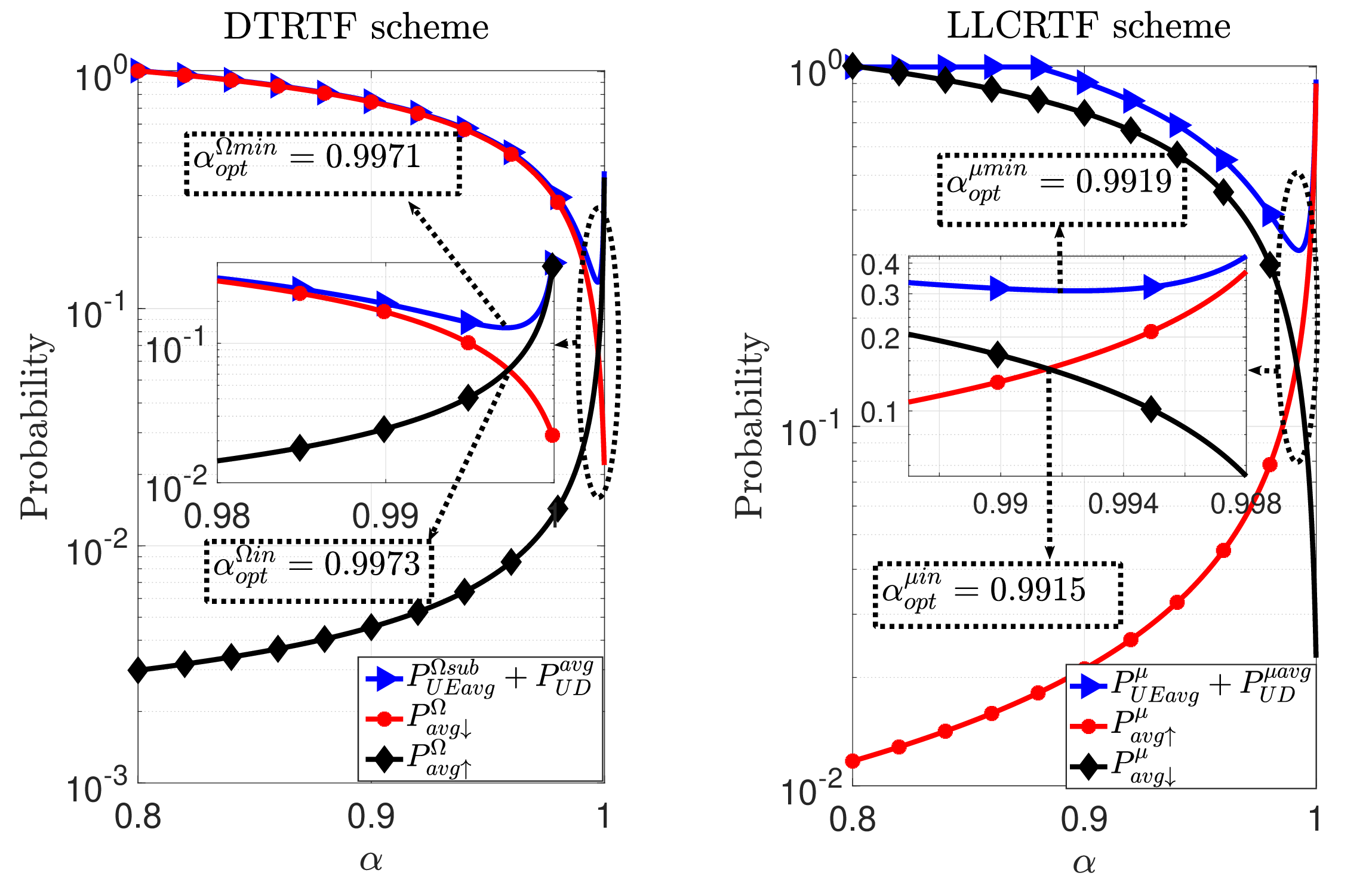}\
    \caption{\textcolor{black}{The left figure illustrates that the intersection between $P_{avg \uparrow}^{\Omega}$ and $P_{avg \downarrow}^{\Omega}$ closely approximates the minimum value of $P_{UEavg}^{\Omega sub} + P_{UD}^{avg}$ for the DTRTF scheme. We assume $N_C=1$, $\sigma^2_{AC}=4$, $\rho_{th}=10^{-5}$, and SNR $=35$ dB. The right figure shows that the intersection between $P_{avg \uparrow}^{\mu}$ and $P_{avg \downarrow}^{\mu}$ is close to the minima value of $P_{UEavg}^{\mu}+P_{UD}^{\mu avg}$ of the LLCRTF scheme for SNR $=35$ dB. }}
\label{min_int_DTRTF_LLCRTF}
\end{figure}

\begin{center}
\begin{table}[h]\caption{Solutions to Problem \ref{main_opt_problem_n_time_slots_minima}, Problem \ref{main_opt_problem_n_time_slots_interesect}, Problem \ref{main_opt_problem_n_tslatency_minima}
and Problem \ref{main_opt_problem_n_latencyts_interesect} for $N_C=1$.}\label{diff_alpha}
{%
\begin{center}
\begin{tabular}{|l|l|l|l|l|l|l|}
\hline
\multicolumn{2}{|c|}{}   & \multicolumn{2}{c|}{DTRTF} & \multicolumn{2}{c|}{LLCRTF}  \\ 
\hline
$$\text{SNR}(\text{dB})$$ & $P_{UF}^{avg}$  & $\alpha^{\Omega min}_{opt}$ & $\alpha^{\Omega in}_{opt}$ &   $\alpha^{\mu min}_{opt}$& $\alpha^{\mu in}_{opt}$  \\ \hline
\textcolor{black}{$20$ }          & \textcolor{black}{$10^{-1}$}                  & \textcolor{black}{$0.9819$}      & \textcolor{black}{$0.9903$}           & \textcolor{black}{$0.9467$}     & \textcolor{black}{$0.9560     $}  \\ \hline
$25$           & $10^{-1}$                  & $0.9933$      & $0.9970$           & $0.9796$     & $0.9842     $  \\ \hline
$30$           & $10^{-1}$                  & $0.9977$      & $0.9991$           & $0.9929$     & $0.9947     $  \\ \hline
$30$           & $10^{-2}$                  & $0.9919$      & $0.9920$           & $0.9768$     & $0.9757     $  \\ \hline
$35$          & $10^{-2}$                   & $0.9971$      & $0.9973$           & $0.9919 $ & $0.9915     $ 
\\ \hline
\end{tabular}
\end{center}
}
\end{table}
\end{center}

To validate the effectiveness of our method of choosing $\alpha$, we use numerical methods to solve Problem \ref{main_opt_problem_n_time_slots_minima}, Problem \ref{main_opt_problem_n_time_slots_interesect}, Problem \ref{main_opt_problem_n_tslatency_minima}, and Problem \ref{main_opt_problem_n_latencyts_interesect}, and list their solutions in Table \ref{diff_alpha} as a function of SNR and the expected false-positive rate of the instantaneous energy detector, denoted by $P_{UF}^{avg}$. To generate these results, we use $N_C=1$, $\sigma_{AC}^2=4$, and the step size of $\alpha$ is set to $0.0001$. The values in Table \ref{diff_alpha} indeed confirm that Problem \ref{main_opt_problem_n_time_slots_interesect} and Problem \ref{main_opt_problem_n_latencyts_interesect} can be solved in practice to recommend an appropriate value of $\alpha$ for the proposed schemes. 
\textcolor{black}{Lastly, recall that our objective is to minimize the sum of average probability of decoding  error at Bob and average probability of detection at Dave for both the schemes. Thus, it is important to discuss the impact on the choice of $\alpha$ on the sum of these probabilities. 
\textcolor{black}{In this direction, we solve Problem \ref{main_opt_problem_n_time_slots_minima} for SNR ranging from 15 dB to 40 dB, in the step size of 5 dB, for $N_C=3$, $\delta=0.842$, $\rho_{th} =10^{-5}$.
Subsequently, in Fig. \ref{SNR_15_40}, we use monte-carlo simulations to plot the sum of the above mentioned probabilities for the optimal decoder,  given in \eqref{JMAP}, and the sub-optimal decoder, given in \eqref{PesubD}, for the DTRTF scheme.} 
Furthermore, we also solve Problem \ref{main_opt_problem_n_tslatency_minima} for various SNR values and $\delta=0.842$. Next, we use these solutions to plot the sum of the probabilities for the LLCRTF scheme.
From these plots, we observe that the sum of these probabilities of both the schemes decreases with SNR, which shows the efficacy of our proposed methods. Also, we observe that the performance of the DTRTF scheme is superior to the LLCRTF scheme, owing to the fact that in the former scheme, Charlie incorporates Alice's bits into his symbols in the form of phase and energy modification, which provides an added layer of reliability to Alice's messages. However, this incorporation of Alice's messages into Charlie's symbols is impractical in the LLCRTF scheme owing to its low-latency feature.} \textcolor{black}{Finally, to showcase the effectiveness of the DTRTF strategy against various values of probability of false-alarm of the instantaneous energy detector, in Fig. \ref{ROC}, we plot the Receiver Operating Characteristic (ROC) curves \cite{ROC} for different SNRs. To generate these plots, for a given average probability of false-alarm, we solve Problem \ref{main_opt_problem_n_time_slots} for $\text{SNR}=25,30,35$ dB, and the solution to Problem \ref{main_opt_problem_n_time_slots} is given by $\alpha_{opt}^{\Omega}$.
Subsequently, we obtain the corresponding value of the average probability of detection for $\alpha_{opt}^{\Omega}$.
From these plots, we observe that the ROC curves of the countermeasure do not appear bulged outwards away from the random-classifier, thereby not providing high probability of detection for low probability of false alarm. This confirms the effectiveness of the proposed countermeasures against the instantaneous energy detector.
}

\begin{figure}[ht!]
\vspace{-0.1cm}
\begin{center}
    
\includegraphics[scale=0.20]{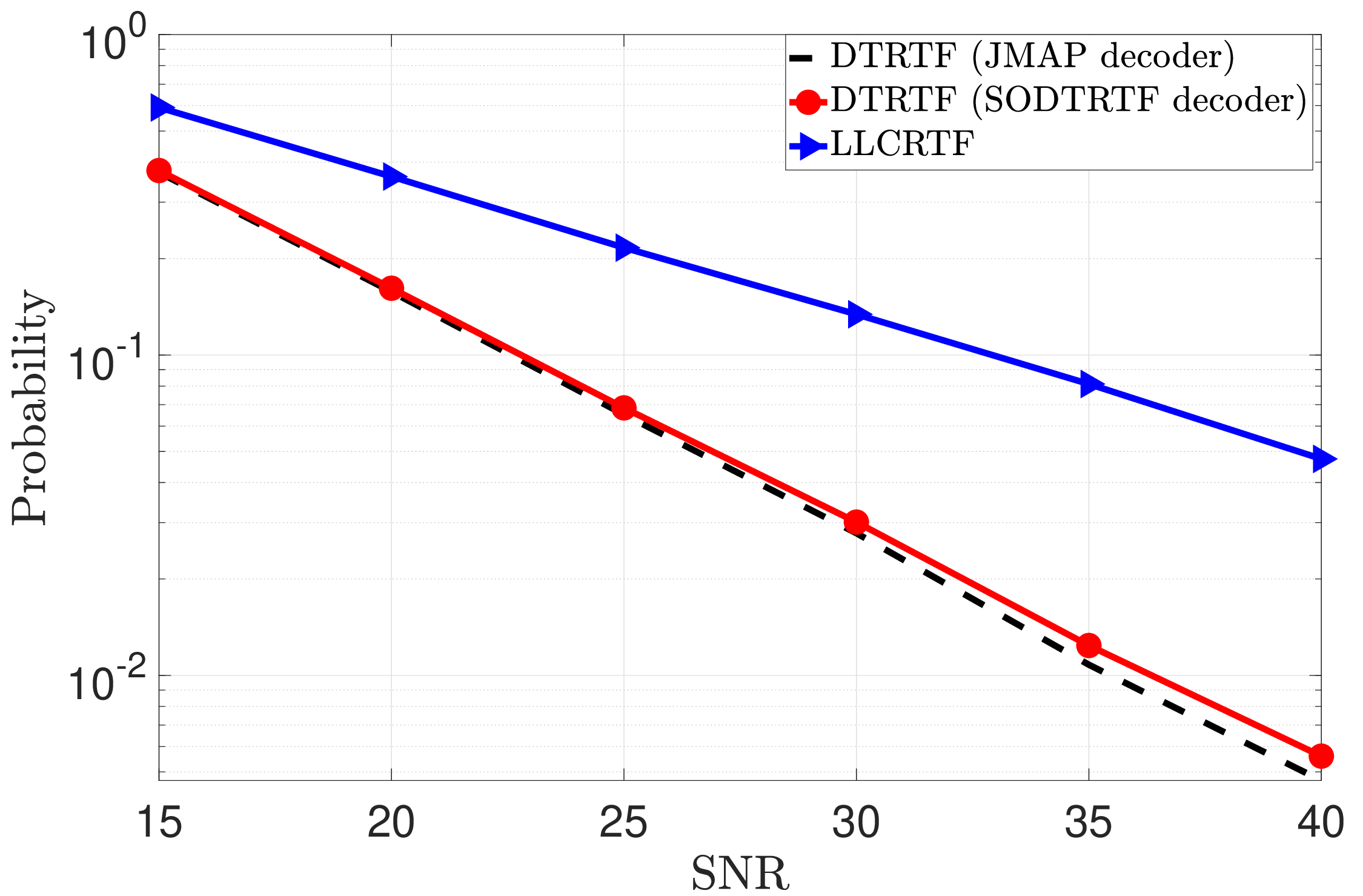}
    \caption{\textcolor{black}{Plots on the sum of probability of decoding error at the destination and  probability of detection at the adversary, as a function of SNR, for the DTRTF and LLCRTF scheme.} }
    \label{SNR_15_40}
\end{center}

\vspace{-0.3cm}
\end{figure}

\begin{figure}[ht!]
   \begin{center}
       {\includegraphics[scale=0.20]{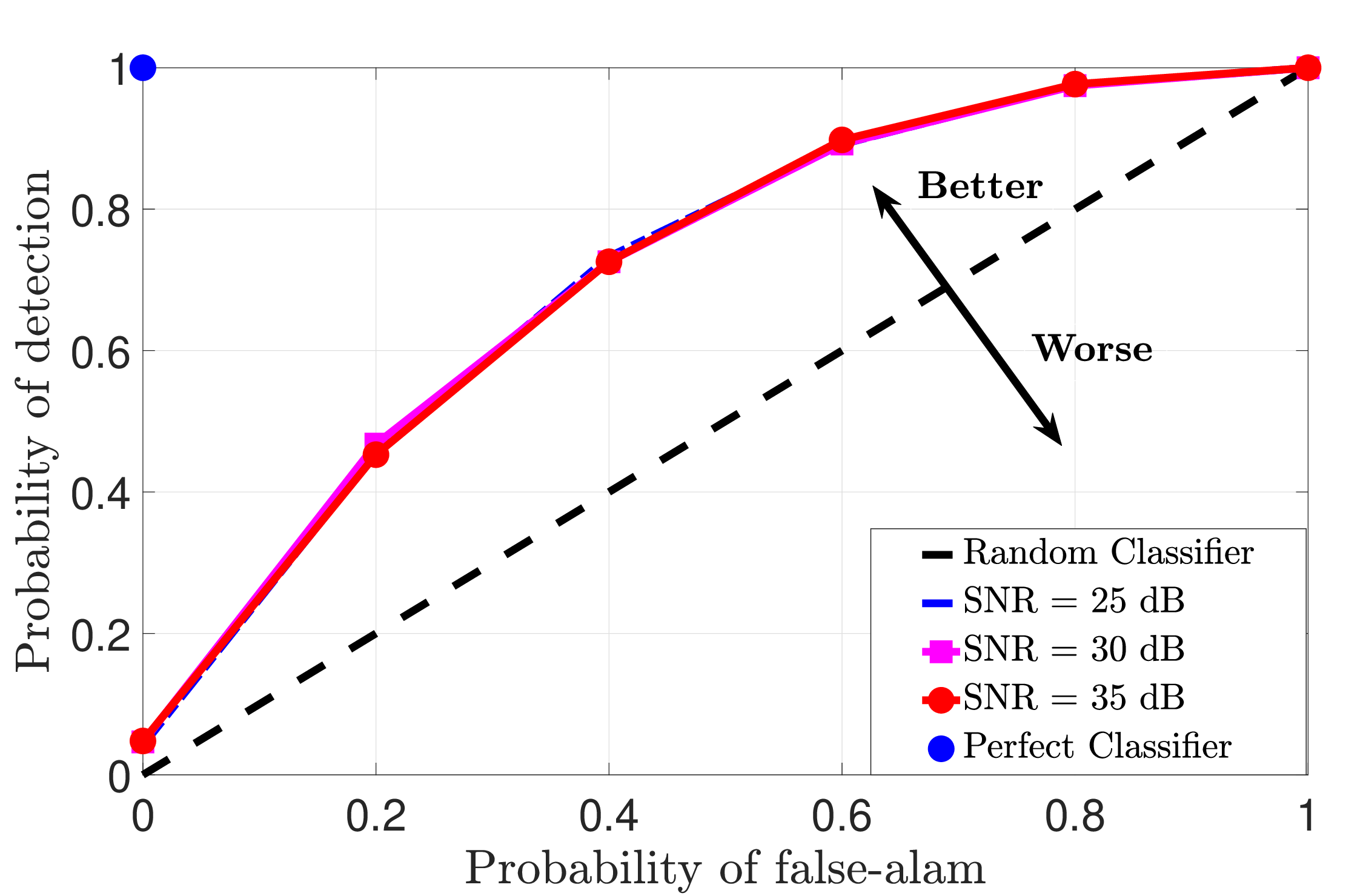}}
\caption{\textcolor{black}{ROC curves for the instantaneous energy detector at Dave, for different SNRs.}} 
\label{ROC}
  \end{center}  
    \end{figure}

Next, we will analyse the \textcolor{black}{covertness} against the KLD-estimator based detector on 
the $f_{CB}$ bands.

\subsection{KLD-Estimator Based Detector}
\label{sec:kld_fcb}

First, we will discuss the covertness of our schemes against the KLD-estimator based detector on the $f_{CB}$ band. When using the KLD-estimator based detector on the $f_{CB}$ band, it is clear that the statistical distribution on the energies of the received symbols after the countermeasure is not identical to that before the countermeasure. As a result, it is important to study the performance of the KLD-estimator based detector when the DTRTF and LLCRTF schemes are deployed for various values of $\alpha \in (0, 1)$.  However, given the short frame lengths of the packets, it is well known that analytically characterising the performance of KLD-estimator is challenging \cite{V3,V4}. As a consequence, we use simulation results to present the performance of the KLD-estimator based detector by assuming that the choice of $\alpha$ for the DTRTF and LLCRTF schemes is already optimized against the instantaneous energy detector. In this context, we solve Problem \ref{main_opt_problem_n_time_slots_interesect} for DTRTF scheme and Problem \ref{main_opt_problem_n_latencyts_interesect} for LLCRTF scheme with the parameters $N_C=1$, SNR of $35$ in dB. Subsequently, we compare the statistical distribution of the received symbols at Dave before and after the countermeasure for these solutions using KLD-estimator based detector \cite{KLD}, and present their KLD estimates in Fig. \ref{KLDgraphfcb} along with the plots corresponding to $\alpha=1$. In the same figure, we also plot the KLD estimates of the RHS when using its appropriate value of $\alpha$ denoted by $\alpha_{opt}^{in}$. 
From Fig. \ref{KLDgraphfcb}, we infer that the KLD estimates for all the schemes are close to zero, which signifies that the statistical distribution of the received symbols on $f_{CB}$ are almost identical before and after the countermeasure for the solutions provided by Problem \ref{main_opt_problem_n_time_slots_interesect} and Problem \ref{main_opt_problem_n_latencyts_interesect}. In practice, Dave will need to set a threshold on the KLD estimate depending on a tolerable false-positive rate, and then raise a detection event when the estimate under inspection is above the threshold. From our results in Fig. \ref{KLDgraphfcb}, it is clear that for the detection of our schemes, even if Dave sets a threshold, the probability of detection will be very small as the KLD estimates for all schemes are close to the one with $\alpha = 1$.

\begin{figure}[ht!]
\includegraphics[scale=0.23]{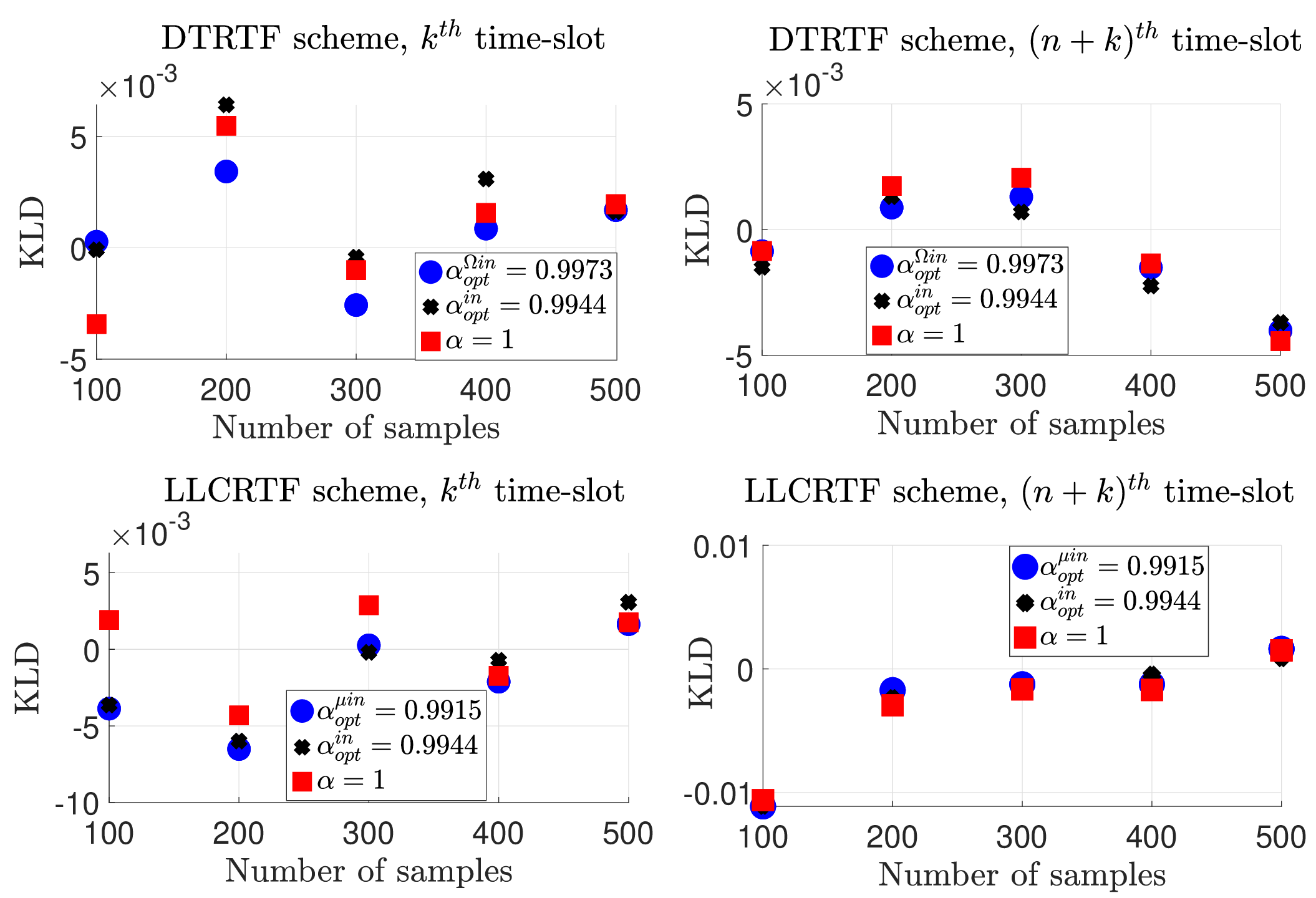}
\caption{\textcolor{black}{Average KLD estimates on $f_{CB}$ for LLCRTF, DTRTF and RHS, for $N_{C}=1$, and SNR  $=35$ dB.}}
\label{KLDgraphfcb}
\end{figure}

\begin{figure}[ht!]%
\vspace{-0.3cm}
\begin{center}
\includegraphics[scale=0.20]{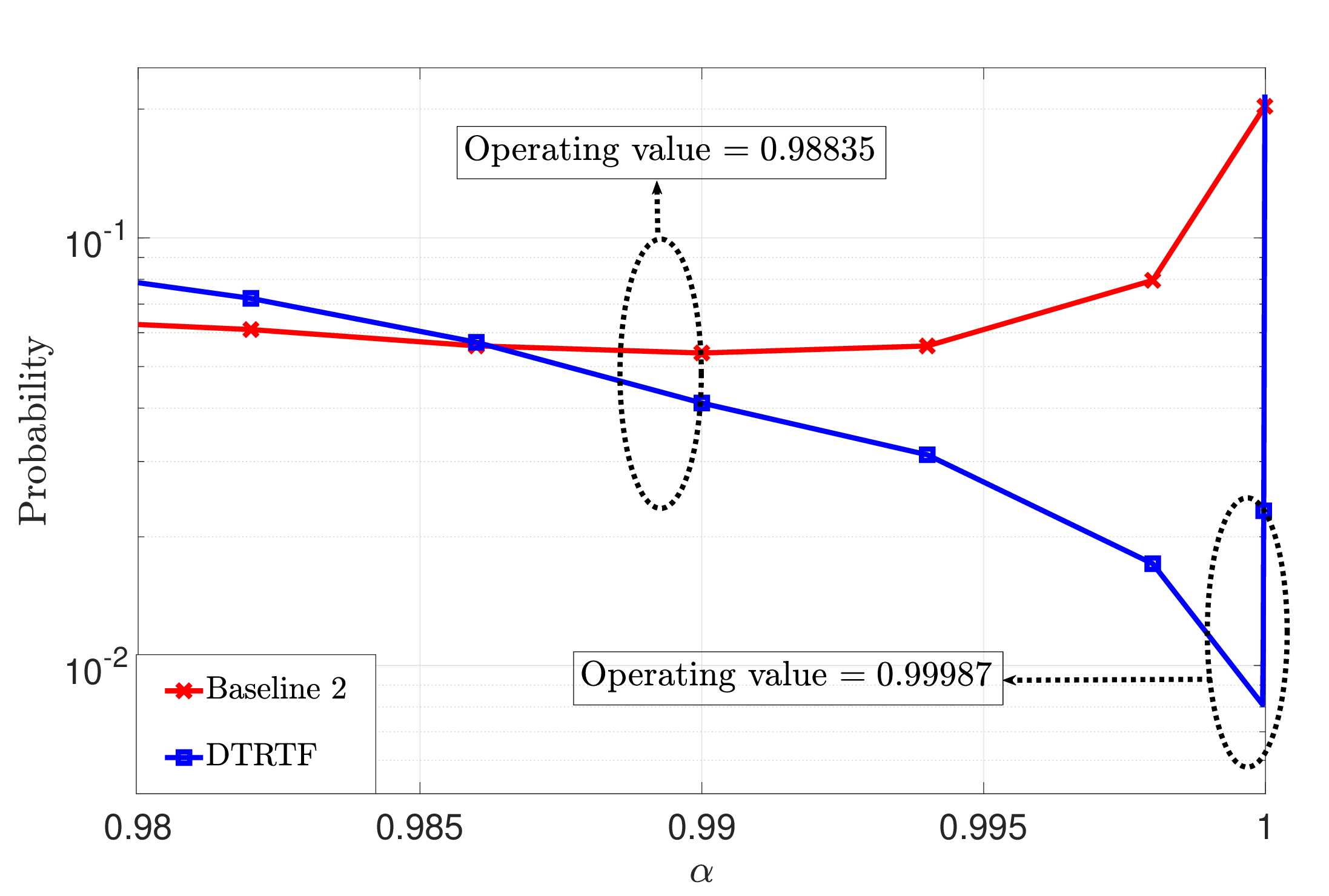}
    \caption{\textcolor{black}{Sum of the probability of decoding error and the probability of detection for the DTRTF scheme and the countermeasure in \cite{lowlatency}.}}
    \label{PePdref3DTRTF}%
        
\end{center}
\vspace{-0.5cm}
\end{figure}

\section{Comparison with Baselines} \label{SR}
\textcolor{black}{ 
We highlight that \cite{V2} and \cite{lowlatency}  have considered threat models wherein the  reactive adversary only measures the average energy on its observations on the victim's frequency band. In contrast, we and the authors in \cite{ISIT,TVT1} consider a stronger threat model, wherein the adversary measures various generalized energy statistics such as the instantaneous energies and the statistical distribution of the energies on all the frequency bands.
As a result, the countermeasures in \cite{V2} and \cite{lowlatency} are not tailor-made for our threat model. Additionally, the countermeasures proposed in \cite{V2} assumes instantaneous decoding of the victim's bit, which is difficult to achieve with practical FDRs, therefore, we will first compare the countermeasure proposed in \cite{lowlatency} with our proposed strategies for the threat model proposed in this work. 
In this direction, in Fig. \ref{PePdref3DTRTF}, we use monte-carlo simulations to plot the sum of probability of decoding error at Bob and the probability of detection at Dave as a function of energy-splitting factor $\alpha$, for $Eb/No$ of $34$ dB (owing to difference in spectral efficiencies), $N_C=190$, $\delta=0.49$, for the DTRTF scheme. Recall that the operating value of $\alpha$ is the value of $\alpha$ for which the sum of the probabilities is minimum. 
From the plots, we observe at the operating value of the countermeasure in \cite{lowlatency}, the value of the sum of the probabilities is strictly higher than that for the DTRTF scheme, thus justifying the need for designing our countermeasure despite   \cite{lowlatency}.}   

\color{black}

As the threat model in our work is same as that in \cite{ISIT,TVT1}, we will now compare our proposed strategies with that of \cite{ISIT,TVT1}. In the preceding section, we have presented methods to choose the energy splitting factor $\alpha$ based on the joint error probability of both the victim and the helper node. Now, we will discuss the impact of such a choice of $\alpha$ on their individual error performance. To generate the simulation results, we use the system model and channel parameters specified in Section \ref{SM}, Section \ref{DTMS} and Section \ref{LLCA}. 
\textcolor{black}{To present the results for DTRTF scheme, we solve Problem \ref{main_opt_problem_n_time_slots_minima} for SNR ranging from $28$ dB to $40$ dB, in the step size of $1$ dB, for $N_C=3$, $\delta=0.483$ and $\rho_{th}=10^{-5}$.}
Then, we use the solutions of Problem \ref{main_opt_problem_n_time_slots_minima} and run monte-carlo simulations to obtain the individual average error performance using the optimal decoder, given in \eqref{JMAP}, and the sub-optimal decoder, given in \eqref{PesubD}. Likewise, for  LLCRTF scheme, we solve Problem \ref{main_opt_problem_n_tslatency_minima} for SNR ranging from $28$ dB to $40$ dB, in the step size of $1$ dB and $\delta=0.483$. Subsequently, we use these solutions to obtain the individual error performance of the two users. Additionally, we also compare the individual error performance of the DTRTF and LLCRTF schemes with that in RHS \cite{ISIT,TVT1}. If we directly use the framework of RHS, the comparison is not fair as the rate of RHS is half whereas the rates of DTRTF and LLCRTF are three-fourth. As a result, for a fair comparison, we use a variant of  RHS, wherein we replace the dummy symbol in RHS with a legitimate $M$-PSK symbol, thereby keeping identical rates. In this variant of RHS, in the first time-slot, Bob treats Alice's transmitted symbol as interference and then decodes Charlie's transmitted symbol. Then, he removes the decoded $M$-PSK symbol from the received symbol before implementing the JMAP and Joint Dominant (JD) decoders in \cite{ISIT,TVT1}. Using the above description, we present the individual error performance of the two users for all the three schemes in  Fig. \ref{Performance comparison}, wherein, the left figure and the right figure capture the average symbol error probabilities of Alice and Charlie, respectively.

From the left figure of Fig. \ref{Performance comparison}, we infer that Alice's error performance when using the DTRTF scheme is poorer than that using the RHS. This is because, for a given SNR, in DTRTF scheme, the behaviour of joint probability of error is such that the solution of Problem \ref{main_opt_problem_n_time_slots_minima} is closer to unity, as compared to that of RHS. 
As a result, for the DTRTF scheme, the energy difference between bit-1 and bit-0 in the $k^{th}$ time-slot decreases, and so is the distance between the concentric circles formed by the two sets of $M$-PSK symbols during the $(n+k)^{th}$ time-slot. Therefore, the individual error performance for Alice's bits in the DTRTF scheme is poorer as compared to that in the RHS. From the right figure of Fig. \ref{Performance comparison}, we infer that Charlie's error performance in the DTRTF scheme is better than that of the RHS. This is because, along the similar lines of the left figure of Fig. \ref{Performance comparison}, in the DTRTF scheme, the solution to Problem \ref{main_opt_problem_n_time_slots_minima} is closer to unity, compared to that in the RHS, As a consequence, the envelope of the $M$-PSK symbols transmitted during the first time-slot shrinks. Secondly, as Alice's bits are considered as interference in the RHS, the overall interference on the first time-slot at Bob increases. As a consequence of the above points, the overall SINR for the $M$-PSK symbol in the first time-slot, decreases. Consequently, the individual error performance of Charlie for the RHS is poorer than that of DTRTF scheme. \textcolor{black}{Thus, the DTRTF scheme is a helper-friendly scheme, and is applicable when the helper node neither wants to compromise on their rate nor their error performance.}

\begin{figure}%
\begin{center}
\includegraphics[scale=0.23]{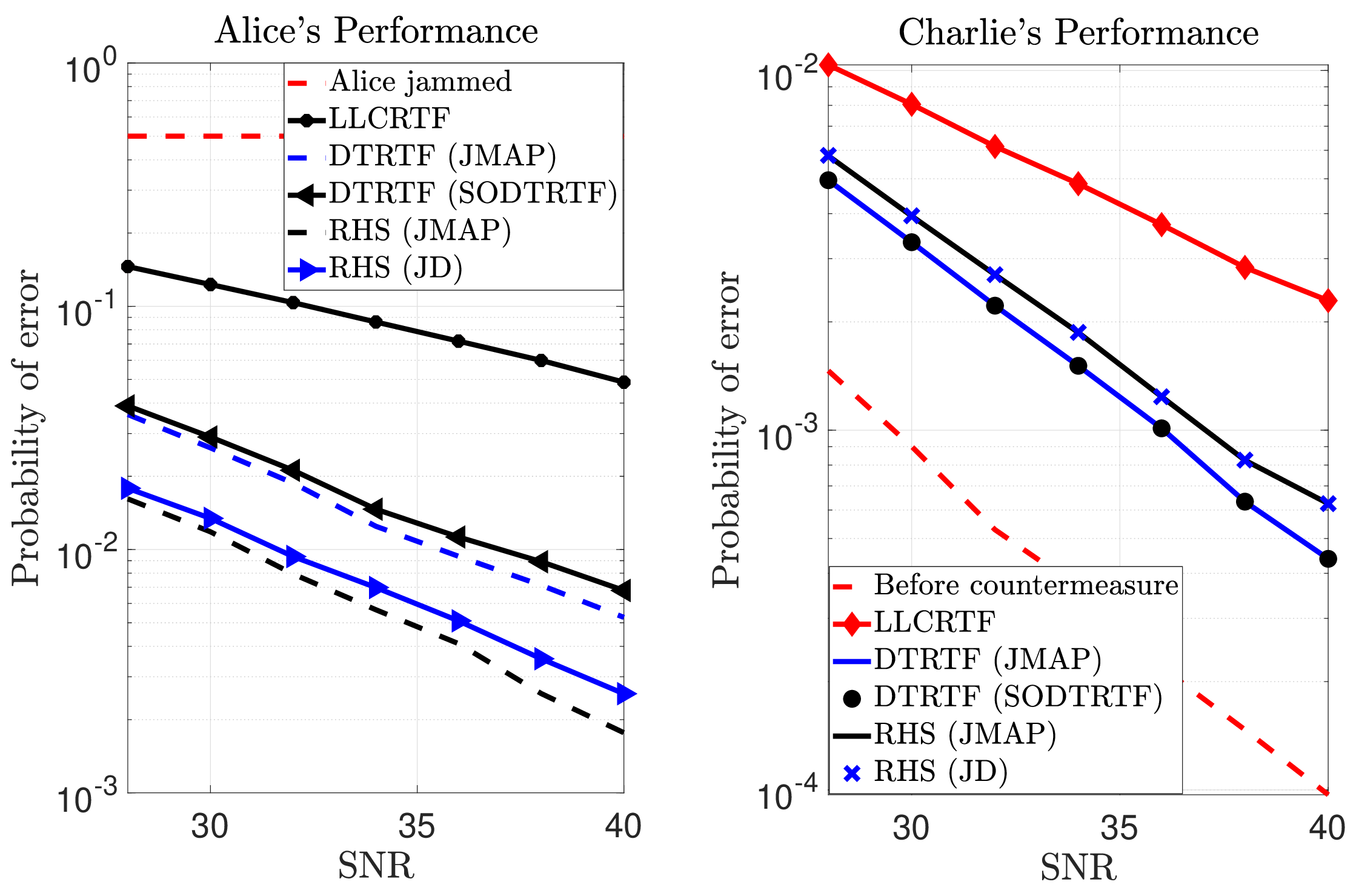}
  
    \caption{\textcolor{black}{Average probability of decoding error associated with DTRTF and LLCRTF and RHS, for
    Alice's bits (shown by the left figure), and Charlie's symbols (shown by the right figure).}}
\label{Performance comparison} 
        
\end{center}
\vspace{-0.5cm}
\end{figure}

\color{black}
\section{Robustness of the Proposed Strategies} \label{Robustness}
In this section, we will discuss the 
 robustness of the proposed strategies by considering QAM at the helper node, and arbitrary position of the helper node with respect to Alice and Bob.\footnote{\textcolor{black}{Alice continues to transmit her information modulated using OOK, and her relative position with respect to Bob is fixed.}} In this direction, we assume that Frank acts as the helper and  uses $M$-QAM to communicate his information symbols to Bob. As posed in Problem \ref{main_opt_problem_n_time_slots}, the performance of the strategies with QAM at the helper node is jointly determined by the probability of decoding error at the destination and the probability of detection at the adversary. Therefore, we fix SNR and vary $\alpha$, and obtain the minimum value of the sum of these probabilities using monte-carlo simulations.\footnote{\textcolor{black}{As the DTRTF scheme outperforms the LLCRTF scheme, we compare the performance of the former scheme with PSK and QAM.}}

To present the performance of the DTRTF scheme with $M$-QAM, in the leftmost figure and the middle figure of Fig. \ref{diff_variance}, we plot the sum of the probability of decoding error at Bob and probability of detection at Dave as a function of SNR for $M$-QAM, corresponding to the plots of $M$-PSK. 
From these plots, we observe that for a given SNR, the proposed DTRTF scheme provides better performance with $M$-QAM than with $M$-PSK.
This is because, for a given SNR,  the probability of error and the probability of detection for $M$-QAM are less than that those for $M$-PSK, consequently, the sum of the probabilities is also less for $M$-QAM compared to that for $M$-PSK.

\begin{figure}[ht!]
\begin{center}
\includegraphics[scale=0.23]{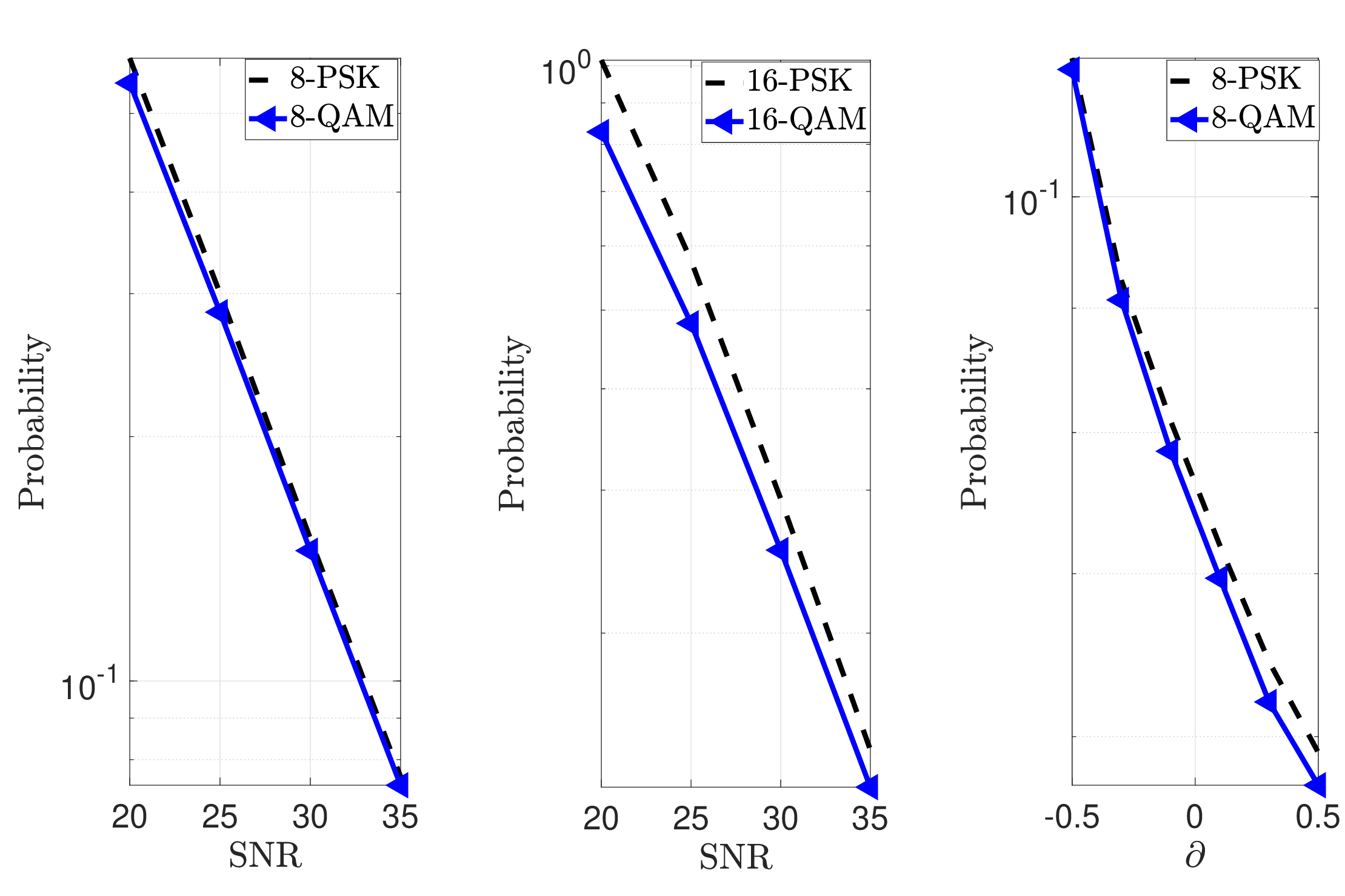}
   \vspace{-0.1cm}
    \caption{\textcolor{black}{DTRTF scheme performance against QAM at the helper node and arbitrary position of the helper node.}}
  \label{diff_variance}%
        
\end{center}
\end{figure}

When the channel variances of the link between Alice-Bob and Helper-Bob are identical, i.e., $\partial=0$, the proposed countermeasures were proven to maintain all the generalized energy statistics on both the frequency bands.
However, in practice, large scale fading on these two channels may not be always identical, i.e., $\partial \neq 0$.
In this direction, in the rightmost figure of Fig. \ref{diff_variance}, we use monte-carlo simulations to plot the sum of probabilities as a function of $\partial$ for $M=8$, at SNR $=35$ dB. 
From this figure, we observe that as $\partial$ increases, i.e., the helpers move closer to the centre, the sum of the probabilities decreases. This is because, as $\partial$ increases, the reliability of the link between Charlie-Bob and Frank-Bob improves, as a result, the probability of error decreases. Additionally, following the footsteps of Fig. \ref{KLDgraphfcb}, in Fig. \ref{KLDgraphfab2}, we plot the KLD estimates for $\partial = 0.5$ for the $f_{AB}$ band, and  we observe that the KLD estimates are close to zero, which signifies that the statistical distribution of the energies of the received symbols are almost identical before and after the countermeasure, even though $\alpha$ is optimised for $\partial=0$. 
% Finally, we conclude that as the KLD estimates are close to the one for $\alpha =1$, the probability of detection will be very small.

\begin{figure}[ht!]
\begin{center}
\includegraphics[scale=0.23]{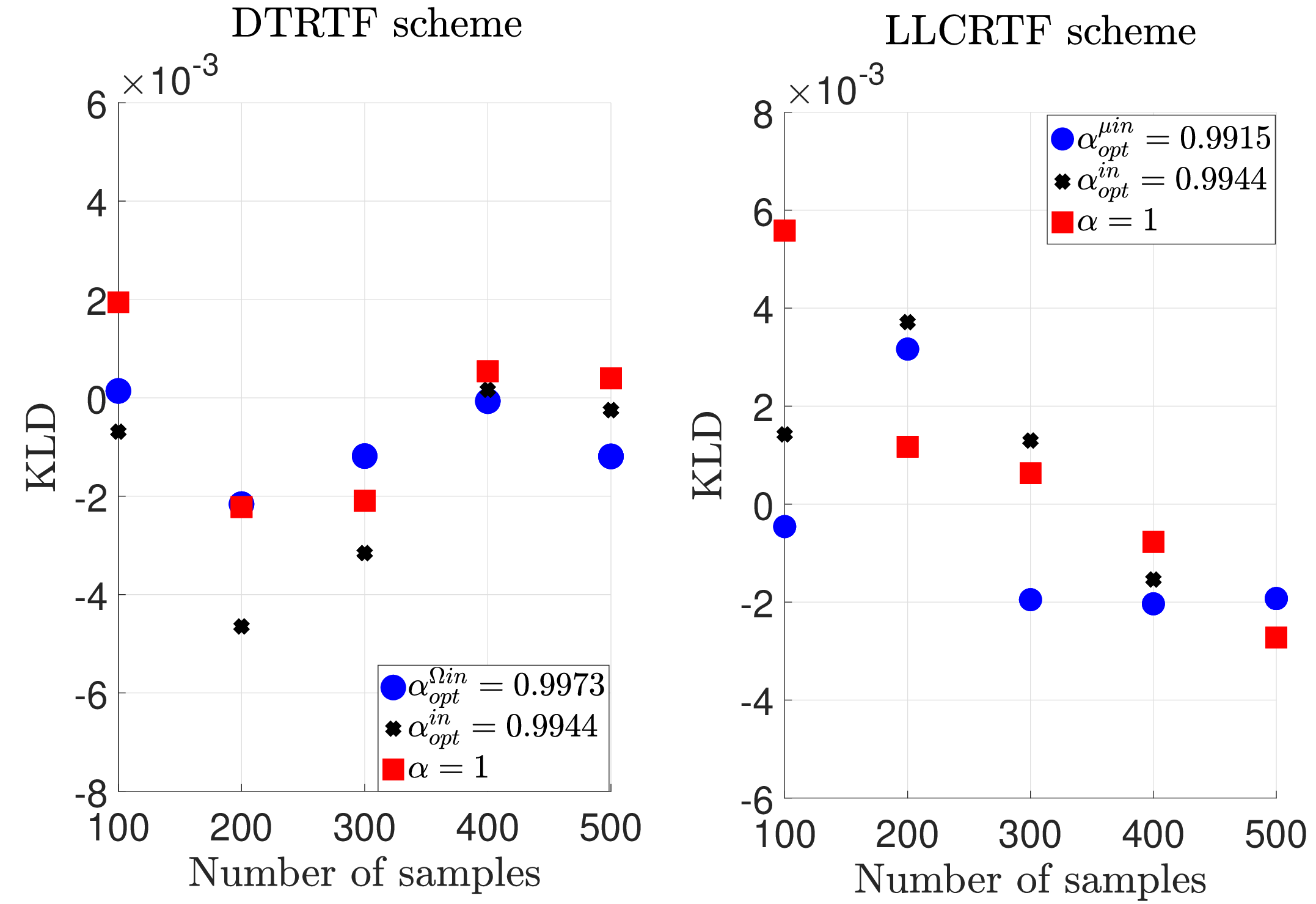}
   
    \caption{\textcolor{black}{Average KLD estimates on $f_{AB}$ for the DTRTF and the LLCRTF for $\partial=0.5$.}}
     \label{KLDgraphfab2}%
        
\end{center}
\vspace{-0.5cm}
\end{figure}

\section{Future Research Directions}\label{SS}

\textcolor{black}{
From the standpoint of wireless channel conditions, our countermeasures are applicable when the helper's channel is quasi-static with arbitrary coherence-time as long as coherent detection can be implemented at Bob. However, for fast-fading channels, wherein the coherence-time is one time-slot, it is well known that non-coherent-modulation schemes are applicable. Consequently, there is a need to design countermeasures using non-coherent modulation for such scenarios. From the standpoint of the adversary, in this work, Dave uses a threshold based technique on the energies of the observed samples to monitor the instantaneous energy of the transmitted symbols on all the bands. Another interesting direction for future research will be to explore machine-learning based methods \cite{ML1} for designing improved instantaneous energy detectors. Subsequently, by using the results on the probability of detection for such detectors, one can arrive at near-optimal energy-splitting factors by using our results on the error performance of the countermeasures.}

\color{black}

\end{document}